\documentclass[journal,twoside,web]{ieeecolor}

\usepackage{generic}
\usepackage{cite}
\usepackage{amsmath,amssymb,amsfonts}
\usepackage{hyperref}
\hypersetup{hidelinks}
\usepackage{textcomp}
\usepackage{lipsum}
\usepackage{multirow}
\usepackage{stfloats}
\usepackage{float}
\usepackage{tcolorbox}
\usepackage{bm}
\usepackage{booktabs}

\newtheorem{theorem}{\textbf{Theorem}}
\newtheorem{problem}{\textbf{Problem}}

\newtheorem{definition}{Definition}
\newtheorem{corollary}{Corollary}
\newtheorem{example}{Example}
\newtheorem{assumption}{Assumption}

\newtheorem{lemma}{Lemma}
\newtheorem{remark}{Remark}

\newcommand{\R}{\mathbb{R}}
\newcommand{\N}{\mathbb{N}}
\newcommand{\He}{\operatorname{He}}

\newcommand{\norm}[1]{\left\lVert #1 \right\rVert}

\usepackage{colortbl}
\usepackage{graphics} % for pdf, bitmapped graphics files
\usepackage{epsfig} % for postscript graphics files
\usepackage{color}
\usepackage{doi}
\usepackage{algorithm}
\usepackage{algorithmic}
\usepackage{color,xcolor}

\def\BibTeX{{\rm B\kern-.05em{\sc i\kern-.025em b}\kern-.08em
        T\kern-.1667em\lower.7ex\hbox{E}\kern-.125emX}}
\begin{document}
    
    \title{Active Informativity: Online Input Design \\
        for Data-Driven Control} 
    \author{Yishu Wang, \IEEEmembership{Member, IEEE}, Paolo Rapisarda, \IEEEmembership{Member, IEEE}, \\           Jianquan Lu, \IEEEmembership{Senior Member, IEEE},  Yang Liu, \IEEEmembership{Senior Member, IEEE}
        \thanks{This work was partially supported by the National Natural Science Foundation of China under Grant 62373105, the Natural Science Foundation of Jiangsu Province under Grant BK20240009, and the Jiangsu Provincial Scientific Research Center of Applied Mathematics under Grant BK20233002. \textit{(Corresponding author: Jianquan Lu.)}}
        \thanks{Yishu Wang is with the School of Mathematics, Southeast University, Nanjing 210096, China {(email: wongyisu@gmail.com)}. }
        \thanks{Paolo Rapisarda is with the School of Electronics and Computer Science, University of Southampton, SO17 1BJ, UK (e-mail: pr3@ecs.soton.ac.uk).}
        \thanks{Jianquan Lu is with the School of Mathematics, Southeast University, Nanjing 210096, China {(email: jqluma@seu.edu.cn)}. }
        \thanks{Yang Liu is with the School of Mathematical Sciences, Zhejiang Normal University, Jinhua, 321004, China, and the Hangzhou School of Automation, Zhejiang Normal University, Hangzhou, 311231, China {(email: liuyang@zjnu.edu.cn)}.}
    }
    
    \maketitle
    
    \begin{abstract}
        We develop a unified framework for online input design in data-driven control from noisy data. We introduce a performance measure and analyze how   such quality measure changes as new data are added. Based on such analysis we propose an online method for selecting inputs that ensures that data quality does not decrease over time. We demonstrate the validity of our approach by formulating six common noisy data-driven control problems in our framework, and by solving numerical examples for three of them.
    \end{abstract}
    
    \begin{IEEEkeywords}
        Experiment design; Data-driven control; Linear systems; LMIs
    \end{IEEEkeywords}

    \section{Introduction}\label{sec:introduction}
    
    % Data-driven control approaches for linear systems are fundamentally based on  implicit representations of the plant dynamics based on measurements that are used to devise control inputs  or controller representations. Historically, the starting point for developing such control methods for exact (noiseless) measurements was the fundamental lemma and persistently exciting inputs (see \cite{Willems2005PE,MarkovskyRapisarda2008DataDriven}), that require the data to be {fully informative} about the system dynamics (i.e. that they be  sufficient to guarantee exact identification of the unique data-generating system). Later, data-driven approaches were developed also for the case when several dynamical systems are compatible with the given measurements, whether for lack of full informativity or because the measurements are affected by noise (see the approach of \cite{VanWaarde2023InformativityApproach} based on informativity; \cite{9308978, VanWaarde2023QMI} based on the matrix $S$-lemma; and \cite{BisoffiDePersisTesi2022Petersen} for a robustness-based approach to data-driven control for noisy measurements). A thorough exposition of the state-of-the-art on data-driven control is in the book \cite{DBLSCT2025}. 
    
    \emph{Input design}, i.e. choosing the input signals applied to a dynamic system to  maximize the ``information content" of the resulting output trajectory,  has a long history in system identification (see for example \cite{GoodwinPayne1977,AnnergrenEtAl2017Application}). Recently such techniques  have raised interest also in data-driven control, motivated by  computational- and storage efficiency considerations (see the papers \cite{AlsaltiLopezMuller2023PEInputs,DePersisTesi2021DesigningExperiments,VanWaarde2022BeyondPE,CAMLIBEL2025106045,iannelli21,IANNELLI2021625,BerberichQuantitativeFL} and Chapter 12 of \cite{DBLSCT2025}). \emph{Online input design}   techniques are especially relevant in a data-driven context, since they build on the partial information about the system dynamics contained in the past data to inform the design of subsequent input values. Informally, such techniques aim at solving the following problem: \emph{given a set of available data} consisting of consecutive measurements of the relevant variables (state $x$ or outputs $y$ and inputs  $u$) produced by a linear, time-invariant system \emph{over $[0,T]$, how can one  choose the  value of $u(T+1)$ so that the corresponding data over $[0,T+1]$ is ``better'' for the control objectives?} 
    
    Naturally, the setting and the application determine what makes a set of data ``better'' than another one, thus determining the choice of the next input value. For example, in the case of \emph{noiseless} measurements  one chooses $u(T+1)$ so as to  increase the rank of the Hankel matrix of the data and achieve more quickly the desired level of persistency of excitation (see \cite{VanWaarde2022BeyondPE}). To the best of the authors’ knowledge, experiment design for the case of \emph{noisy} data has received considerably less attention. One exception is \cite{CoulsonQuantitativePE}, where a quantitative notion of ``persistency of excitation" was introduced, related to the smallest singular value of the data matrix. It was experimentally verified therein that a larger  persistency of excitation degree corresponds to a better performance of  data-driven control algorithms. In \cite{CoulsonQuantitativePE} the selection of the input was made \emph{off-line}. In this paper, we study the \emph{online} input design problem in the informativity framework (see \cite{VanWaarde2023InformativityApproach}), assuming a noise model characterized by a quadratic matrix inequality (QMI), as proposed in \cite{VanWaarde2023QMI}. Such noise model captures several situations of practical interest (see Section 2 of \cite{VanWaarde2023QMI}) and under suitable assumptions can be effectively used to solve data-driven control problems by reducing them to the solution of semi-definite programs (SDPs) whose structure (number of variables, constraints, etc.) is problem-dependent (see \cite{VanWaarde2022Dissipativity,VanWaarde2024BehavioralNoisyIO,VanWaardeCamlibelMesbahi2022NoisyData}). 
    
    Our contributions are:
    \begin{enumerate}
        \item In Section \ref{sec:setupfram} we provide a  general framework to formulate the structure of  SDPs arising in data-driven control in the noisy case. 
        
        \item We show in Section \ref{sec:setupfram} and in Appendix \ref{app:3moreex} how such general formulation covers six  data-driven control problems, namely  quadratic stabilization, $\mathcal{H}_2$ control, $\mathcal{H}_\infty$ control, see \cite{VanWaardeCamlibelMesbahi2022NoisyData}; additive- and multiplicative feedback fragility, see \cite{li2025fragilityanalysisdatadrivenfeedback}; and strict dissipativity, see \cite{VanWaarde2022Dissipativity};
        
        \item Within such unifying framework, we define an \emph{informativity measure} associated to the data available at time $T$ (Section \ref{sec:setupfram}), to capture the quality of the data with respect to the control objective;  
        \item In Section \ref{sec:2keylemmas} we show how the value of such informativity measure for the data over $[0,T+1]$ is related to its value for the data over $[0,T]$;  
        \item We exploit such relation to develop a general online input design procedure (see Sections \ref{sec:SOCP},  \ref{sec:SDP}) to select at time $T+1$ an input value that improves the quality of the data at time $T+1$. 
    \end{enumerate}
    
    The results presented here support also \emph{methodologically} the informativity framework for the noisy case. In such framework controllers are computed solving SDPs involving the data. Such SDPs have a fixed structure and often the user has limited experience of how to choose the parameters of the LMI solver to achieve a better result. The selection of data is consequently an expedient way to improve the performance of the controller. In this paper, we show that assuming some prior knowledge on the noise, the choice of inputs can be done online and that it considerably improves the outcome of the data-driven controller design procedure (on this issue, see the last sentence of \cite{CoulsonQuantitativePE}). Admittedly, given that such performance improvement comes at the cost of additional computation and storage requirement,  adopting  the methodology proposed here involves practical trade-offs that must be evaluated case-by-case. In the numerical examples presented in this paper the online procedure considerably outperforms a random choice of inputs.  
    
    The paper is organized as follows: in Section \ref{sec:setup} we define the notation, recall the notion of data-consistent system, and provide two descriptions (one QMI-, the other matrix-ellipsoid-based) of such systems. In Section  \ref{sec:genfram} we formulate a general form of the LMIs involved in data-driven control problems, and we prove two instrumental results. In Section \ref{sec:perfimpro} we illustrate the two optimization sub-problems on which our input procedure is based. The procedure is formally stated in Section \ref{sec:algo}, where we apply it to three exemplary problems. Appendix \ref{app:3moreex} and Appendix \ref{app:3morappl} contain further examples and applications, to emphasize the generality of the framework we propose. Appendix \ref{app:cover} contains some background information related to covering of bounded compact sets to make the paper self-contained. 
    
    \subsection*{Notation.}
    
    $\N$ and $\R$ denote respectively the set of natural and real  numbers; $\R^n$ denotes the space of $n$-dimensional vectors with real entries; and  $\R^{n\times m}$ denotes the set of $n\times m$ matrices with real entries. The transpose of a matrix $M$ is denoted by $M^\top$.  
    Given a matrix $M$, $\operatorname{He}(M):=M+M\top$, and $M^\dagger$ denotes the Moore–Penrose inverse. 
    Given $Q=Q^\top\in\R^{n\times n}$, $Q\succ 0$ denotes positive-definiteness and $Q\succeq 0$ semi-definite positiveness. If $Q\succeq 0$, then $\lambda_{\mathrm min}(Q)$ its smallest eigenvalue, $Q^\frac{1}{2}$ denotes any matrix such that $Q^\frac{1}{2} Q^\frac{1}{2} =Q$. If $Q=Q^\top\in\R^{n\times n}$ and $v\in\R^n$, $\|v \|_Q^2:=v^\top Q v$. The Hermitian part of a matrix expression is denoted by $\He$.

    \section{Background material}\label{sec:setup}
    We recall the definitions and notation from the literature that are most relevant for the illustration of the  results presented in this paper. We refer the reader to \cite{VanWaarde2023QMI,CAMLIBEL2025106045,VanWaardeCamlibelMesbahi2022NoisyData} and \cite{BisoffiDePersisTesi2022Petersen} for a thorough exposition. 
    
    \subsection{Data-consistent systems}
    The unknown discrete-time data-generating linear system has the representation 
    \begin{align}\label{eq:truesyst}
        x_{t+1}=A_\star x_t+B_\star u_t+e_t,
    \end{align}
    where $x_t\in\mathbb R^n$, $u_t\in\mathbb R^m$, and $e_t\in\mathbb R^n$ is an unmeasured process disturbance. In the rest of this paper we assume that the system $(A_\star,B_\star)$ is controllable but known, and that only the input and the state variable are available for measurements; from such data and the unmeasured noise we define the matrices 
    \begin{align}\label{eq:symboldef}
        X_T&=[x_0\ \cdots\ x_{T-1}]\in\mathbb R^{n\times T},\nonumber\\ 
        U_T&=[u_0\ \cdots\ u_{T-1}]\in\mathbb R^{m\times T},\nonumber\\ 
        Y_T&=[x_1\ \cdots\ x_T]\in\mathbb R^{n\times T},\nonumber\\
        E_T&=[e_0\ \cdots\ e_{T-1}]\in\mathbb R^{n\times T}\;.
        %    W_T&=\begin{bmatrix}X_T\\U_T\end{bmatrix}\in\mathbb R^{p\times T}\; .
    \end{align}
    In the rest of the paper we denote 
    \begin{equation}\label{eq:pdef}
            W_T:=\begin{bmatrix}X_T\\U_T\end{bmatrix}\in\mathbb R^{p\times T}, \qquad p:=n+m\; .
    \end{equation} 
    We make the following assumption.\medskip 
    
    \begin{assumption}[Full row-rank data-matrix]\label{ass:fullrank}
        The measurements matrix $W_T$ has full row-rank:
        \[
        \operatorname{rank}(W_T)=p\; .
        \]
    \end{assumption}\medskip 
    We denote $
    \Theta_\star:=[A_\star\ B_\star]\in\mathbb R^{n\times p}$; it follows from the positions \eqref{eq:symboldef}--\eqref{eq:pdef} that 
    \begin{align*}
        Y_T=\Theta_\star W_T+E_T\; .
    \end{align*}
    We make the following assumption on the noise sequence. \medskip 
    \begin{assumption}[Energy-bound on noise]\label{ass:noisebound} There exists a known matrix sequence $\{\Delta_t\in\R^{n\times n} \}$ 
        such that for every $T\in\N$ the matrix $E_T$ defined in \eqref{eq:symboldef} corresponding to the noise sequence $\{e_t\}_{t=0,\ldots,T-1}$ belongs to the set $\mathcal E_T$ defined by
        \begin{align}\label{eq:noiseset}
            \mathcal E_T:=\left\{E\in\mathbb R^{n\times T}: E E^\top\preceq\Delta_T\right\}.
        \end{align}
    \end{assumption}\medskip 
    
    \begin{remark}\label{rem:QMI}\rm
        The case of a uniform bound $e_t e_t^\top \preceq \epsilon I$ for all $t\in\N$ falls under Assumption \ref{ass:noisebound}; choose $\Delta_T=T \epsilon I_n$. 
        
        In \cite{DBLSCT2025} QMI descriptions 
        \[
        \begin{bmatrix}
            I\\ E_T^\top
        \end{bmatrix}^\top \begin{bmatrix}
            \Phi_{11}&\Phi_{12}\\ \Phi_{12}^\top& \Phi_{22}
        \end{bmatrix} \begin{bmatrix}
            I\\ E_T^\top
        \end{bmatrix}\succeq 0\; ,
        \]
        of the noise properties have been considered, see Section 3.4 therein. The assumption on noise-boundedness represented by the matrix inequality in \eqref{eq:noiseset} corresponds to the case $\Phi_{12}=0$, $\Phi_{11}=\Delta_T$, and $\Phi_{22}=-I$. Note that in this case the matrix $\Phi$ depends on the number of available data samples. \hfill $\blacksquare $ 
    \end{remark}\smallskip
    
    In some data-driven control problems it is necessary to formalize \emph{input constraints}; we do this defining a set $\mathcal U$ that may include  several  constraints such as norm-boundedness, linear constraints, or distance from a given signal:
    \begin{align}\label{eq:mathcalU}
        \begin{cases}
            \|u\|_\infty\le\bar u_\infty,\\
            \|u\|_2\le \bar u_2,\\
            (u-u_c)^{\top} H_u^{-1}(u-u_c)\le1,\\
            Gu\le g.
        \end{cases}
    \end{align}
    The following assumption is satisfied for the examples above. 
    \begin{assumption}[Compactness of $\mathcal{U}$]\label{ass:compact}
        The set $\mathcal{U}$ is compact. 
    \end{assumption}
    %\textcolor{red}{YISHU: we need this assumption for the cover, see after Lemma \ref{lemma3}}
    \medskip
    
    Given (state,input) data $\{x_t\}$ and $\{u_t\}$ generated by the ``true" system $(A_\star,B_\star)$ as in \eqref{eq:truesyst} together with the noise model \eqref{eq:noiseset}, there exist many pairs $(A,B)$ such that 
    \[
    Y_T=\begin{bmatrix}
        A&B
    \end{bmatrix} W_T+E_T \; , \; E_T\in\mathcal E_T\; .
    \]
    We call each such pair, corresponding to the matrix 
    \[
    \Theta:=\begin{bmatrix}
        A&B
    \end{bmatrix}\; ,
    \]
    a \emph{data-consistent system}. 
    We denote the set of {data-consistent systems} by 
    \begin{align}\label{eq:dataconsist}
        \Sigma_T := \left\{ \Theta\in\mathbb R^{n\times p}:(Y_T-\Theta W_T)(Y_T-\Theta W_T)^{\top}\preceq\Delta_T\right\}.
    \end{align}
    In the rest of this paper we use two descriptions of $\Sigma_T$; the first one is the QMI-based one pioneered by \cite{VanWaardeCamlibelMesbahi2022NoisyData}, and the second one is the matrix-ellipsoid introduced in \cite{BisoffiDePersisTesi2021Tradeoffs}. We briefly recall them in the next two subsections. 
    
    \subsection{QMI description of $\Sigma_T$}
    Following the discussion in Remark \ref{rem:QMI}, we define 
    \begin{align*}
        \Phi_T := \begin{bmatrix}
            \Delta_T & 0 \\
            0 & -I_T
        \end{bmatrix}, \quad
        G_T := \begin{bmatrix}
            I_n & Y_T \\
            0 & -X_T \\
            0 & -U_T
        \end{bmatrix}\; ,
    \end{align*}
    and 
    \begin{align}
        N_T&:=G_T\Phi_TG_T^\top\notag\\ 
        &= \begin{bmatrix}
            I_n \\
            0 \\
            0
        \end{bmatrix} \Delta_T \begin{bmatrix}
            I_n \\
            0 \\
            0
        \end{bmatrix}^\top - \begin{bmatrix}
            Y_T \\
            -X_T \\
            -U_T
        \end{bmatrix} \begin{bmatrix}
            Y_T \\
            -X_T \\
            -U_T
        \end{bmatrix}^\top. \label{eq13}
    \end{align}
    It is a matter of straightforward verification to check that the following equivalence holds: 
    \begin{align}
        \Theta\in\Sigma_T \quad\Longleftrightarrow\quad \begin{bmatrix}I_n\\\Theta^{\top}\end{bmatrix}^{\top} N_T \begin{bmatrix}I_n\\\Theta^{\top}\end{bmatrix} \succeq0.
    \end{align}
    % This is exactly the data QMI shared by all six kinds of performance in the Matrix S-Lemma framework.

    \subsection{Matrix ellipsoid description of $\Sigma_T$}
    
    Define 
    \begin{align}\label{eq:ssellipsoid}
        \mathbf A_T&:=W_TW_T^{\top}\; ,\nonumber \\
        \mathbf B_T&:=-W_TY_T^{\top}\; ,\\
        \mathbf C_T&:=Y_TY_T^{\top}-\Delta_T\; .\nonumber
    \end{align}
    Under Assumption \ref{ass:fullrank} $\mathbf A_T$ is invertible (and consequently positive-definite); define  
    \begin{align}\label{eq:zeta&QT}
        \bm \zeta_T&:=-\mathbf A_T^{-1}\mathbf B_T\in\mathbb R^{p\times n},\nonumber\\
        \mathbf Q_T&:=\mathbf B_T^{\top}\mathbf A_T^{-1}\mathbf B_T-\mathbf C_T\succeq0.
    \end{align}
    It can be verified that the data-consistent system set $\Sigma_T$ can be written as the following matrix ellipsoid (see \cite{BisoffiDePersisTesi2022Petersen}):
    \begin{align}\label{eq:matrellips}
        \Sigma_T = \left\{ \Theta=Z^{\top}: Z=\bm \zeta_T+\mathbf A_T^{-1/2}\Upsilon \mathbf Q_T^{1/2},\ \|\Upsilon\|\le1 \right\}.
    \end{align}

    % \textcolor{red}{{\bf YISHU}- I did not raise this before but I need to now: why do you use a ``$Z$", a ``$Z^\top$", etc.?}
    
    % \textcolor{blue}{{\bf PAOLO}- This presentation is chosen to be consistent with (16) and (19) in \cite{BisoffiDePersisTesi2022Petersen}. The result was first obtained therein, and I therefore follow that formulation.}
    
    \section{A unifying Framework}\label{sec:genfram}
    We introduce in Section \ref{sec:setupfram} a parametric version of the LMIs involved in many data-driven control problems, and the definition of a performance index related to its solutions. In Section \ref{sec:2keylemmas}  we state two instrumental results to prove the main results of the paper. 
    \subsection{Set-up}\label{sec:setupfram}

    % All six criteria have the same structure when the data are updated:
    % \begin{align}
        %     \mathcal C_{T+1}=\mathcal C_T+\alpha hh^{\top}-\alpha R.
        % \end{align}
    % where
    % \begin{align}
        %     \mathcal C_{T} = &\text{ data-independent certificate part}  \notag\\
        %     &- \alpha\times\text{common data QMI}.
        % \end{align}
    
    % \subsection{Matrix S-Lemma criterion}\label{sec:genframSLemma}
    
    We denote by $\vartheta:= (\vartheta_1,\vartheta_2,\ldots,\vartheta_s)$  the decision variables involved in the LMIs;  by $\alpha\ge0$ the matrix S-Lemma multiplier; and by $\rho$  an index measuring ``performance", with larger values being associated with better performance. Many informativity conditions appearing in data-driven control problems in the informativity setting can be written in a unified form (the subscript $T$ indicates the number of data samples) as 
    \begin{align}\label{eq18}
        {\mathcal C_T(\vartheta,\alpha,\rho) := \mathcal F(\vartheta,\rho)-\alpha ON_TO^{\top}\succeq 0\; .} 
    \end{align}
    Here, $O$ is a zero-padding constant matrix; $\mathcal F$ is data-independent and is affine in the decision variables $\vartheta$ and $\rho$, and $N_T$ is defined in \eqref{eq13}. 
    
    We express all \emph{data-independent constraints} (positive-definiteness, trace, performance, and scaling-normalization constraints) by defining a set $\mathcal{K}$ and imposing that 
    \begin{align}
        (\vartheta,\alpha,\rho)\in\mathcal K\; .
    \end{align}
    \begin{definition}
        The \emph{performance index} associated with the first $T$ data samples is 
        \begin{align}
            \rho_T^\star := \sup\left\{ \rho: \exists\ (\vartheta,\alpha,\rho)\in\mathcal K \text{ such that } \mathcal C_T(\vartheta,\alpha,\rho)\succeq0 \right\}. \label{eq20}
        \end{align}
    \end{definition}
    We exemplify such concepts using three standard data-driven control problems, see \cite{DBLSCT2025} for more details. We illustrate three other examples in  Appendix \ref{app:3moreex}. 
    
    \begin{example}[Quadratic stabilization]\label{ex:QS}\rm We choose as decision variables the pair $\vartheta=(P,L)$ and define
        \begin{align*}
            \mathcal F(P,L,\rho) &:= \begin{bmatrix}
                P & 0 & 0 & 0 \\
                0 & -P & -L^{\top} & 0 \\
                0 & -L & 0 & L \\
                0 & 0 & L^{\top} & P
            \end{bmatrix} -\rho I_{3n+m} \\
            O &:= \begin{bmatrix}
                I_{2n+m} \\
                0_{n\times(2n+m)}
            \end{bmatrix}\; .
        \end{align*}
        To remove the homogeneous scaling in the criterion, we define 
        \begin{align*}
            \mathcal K_{stab} := \left\{ (P,L,\alpha,\rho): P\succeq0,\ \alpha>0,\ \operatorname{tr} P+\alpha=1 \right\}\; .
        \end{align*}
        The corresponding performance is the normalized strict LMI margin defined by 
        \begin{align*}
            \rho_T^\star:= \sup\left\{ \rho: \exists\  (P,L,\alpha,\rho)\in\mathcal K_{\rm stab}\text{ such that } \right.\\
            \left.\mathcal F(P,L,\rho)-\alpha ON_TO^{\top}\succ 0\right\}.
        \end{align*}
        
        The value $\rho_T^\star$ is the quadratic-stabilization certificate margin of $\Sigma_{T}$. Its positivity is
        equivalent to the existence of a strict common quadratic stabilization certificate for all systems in $\Sigma_{\mathcal D_T}$. If such a condition holds, then for every $(P, L)$ for which $\exists\ \alpha$ s.t. $(P,L,\alpha,\rho^\star)\in\mathcal K_{stab}$ it holds that $K=LP^{-1}$ is a stabilizing feedback gain for all systems in $\Sigma_{T}$. \hfill $\blacksquare $
    \end{example}
    \smallskip
    
    \begin{example}[$\mathcal H_2$ control]\label{ex:H2}\rm 
        Let the performance output be $z=Cx+Du\in\mathbb R^\ell$ and define $C_{Y,L}:=CY+DL$. Set $\rho=-\gamma_2^2$, take $\vartheta=(Y,L,Z)$, and define
        \begingroup
        \small
        \setlength{\arraycolsep}{4pt}
        \begin{align*}
            \mathcal F(Y,L,\rho) &:= \begin{bmatrix}
                Y & 0 & 0 & 0 & 0 \\
                0 & 0 & 0 & Y & 0 \\
                0 & 0 & 0 & L & 0 \\
                0 & Y & L^{\top} & Y & C_{Y,L}^{\top} \\
                0 & 0 & 0 & C_{Y,L} & I_\ell
            \end{bmatrix}, \\
            O &= \begin{bmatrix}
                I_{2n+m} \\
                0_{(n+\ell)\times(2n+m)}
            \end{bmatrix}.
        \end{align*}
        \endgroup
        Let
        \begin{align}
            &\mathcal K_{\mathcal H_2} := \Bigg\{ (Y,L,Z,\alpha,\rho):
            \ Y=Y^{\top}\succeq0,\ Z=Z^{\top},\ \alpha>0,\notag\\
            &\rho\leq0,\ 
            \begin{bmatrix}
                Y & C_{Y,L}^{\top} \\
                C_{Y,L} & I_\ell
            \end{bmatrix}\succeq0,\ 
            \begin{bmatrix}
                Z & I_n \\
                I_n & Y
            \end{bmatrix}\succeq0,\ 
            \operatorname{tr} Z+\rho\leq0
            \Bigg\}.
        \end{align}
        The corresponding optimal certifiable task level is
        \begin{align}
            \rho_T^\star = \sup\left\{\rho:\exists\ (Y,L,Z,\alpha,\rho)\in\mathcal K_{\mathcal H_2},\right. \notag\\
            \left. \mathcal F(Y,L,\rho)-\alpha ON_TO^{\top}\succ 0\right\}.
        \end{align}
        The value $\rho_T^\star$ is the supremal certifiable $\mathcal H_2$ task level. Equivalently, $-\rho_T^\star$ is the infimum of the squared common $\mathcal H_2$ performance bounds certified from the data. Thus, increasing $\rho_T^\star$ decreases the best certified $\mathcal H_2$ bound.\hfill $\blacksquare $
    \end{example}
    \smallskip

    \begin{example}[Dissipativity]\label{ex:Diss}\rm 
        Let $z=Cx+Du\in\mathbb R^\ell$ and fix the quadratic supply rate
        \begin{align}
            s(u,z) = \begin{bmatrix}u\\z\end{bmatrix}^{\top} S \begin{bmatrix}u\\z\end{bmatrix}, \qquad S=S^{\top}.
        \end{align}
        Assume that $S$ is nonsingular and satisfies the standard inertia conditions of the dissipativity criterion. Partition
        \begin{align}
            -S^{-1} = \begin{bmatrix}
                \widehat F & \widehat G \\
                \widehat G^{\top} & \widehat H
            \end{bmatrix},
        \end{align}
        where $\widehat F\in\mathbb S^m$ and $\widehat H\in\mathbb S^\ell$. Take $\vartheta=Q$ and define
        \begin{align}
            \mathcal F(Q,\rho) &:= \begin{bmatrix}
                Q & 0 & 0 & 0 \\
                0 & \widehat H & 0 & -\widehat G^{\top} \\
                0 & 0 & -Q & 0 \\
                0 & -\widehat G & 0 & \widehat F
            \end{bmatrix} -\rho I_{2n+m+\ell}, \\
            O &= \begin{bmatrix}
                I_n & 0 & 0 \\
                0 & -C & -D \\
                0 & I_n & 0 \\
                0 & 0 & I_m
            \end{bmatrix}.
        \end{align}
        Let
        \begin{align}
            \mathcal K_{\rm diss} := \left\{ (Q,\alpha,\rho): Q=Q^{\top}\succ0,\ \alpha>0,\ \rho\in\mathbb{R} \right\}.
        \end{align}
        The corresponding optimal performance level is
        \begin{align}
            \rho_T^\star =\sup\left\{\rho:\exists\ (Q,\alpha,\rho)\in\mathcal K_{\rm diss},\right.\notag\\ 
            \left. \mathcal F(Q,\rho)-\alpha ON_TO^{\top}\succeq0\right\}.
        \end{align}
        The value $\rho_T^\star$ is the uniform strict-dissipativity certificate margin associated with the fixed supply matrix $S$. Its positivity guarantees a common quadratic storage function that establishes strict dissipativity for every system in $\Sigma_T$. \hfill $\blacksquare $
    \end{example}
    \medskip
    
    Let $\mathcal D_T=(W_T, Y_T, \Delta_T)$ denote the current data record. The main problem investigated in our paper is how to improve  performance by designing the input based on the current data. To formalize such problem, we need the following definition.
    \begin{definition}[Online input policy]
            An online input policy at time $T$ is a map $\pi_T$ that assigns to each data record $\mathcal D_T$ an admissible  input
            \begin{align*}
                u_T=\pi_T(\mathcal D_T)\in\mathcal U.
            \end{align*}
    \end{definition}
    \medskip 

    \begin{problem}[Input design]
            Given a control problem with fixed $\mathcal F$, $O$, and $\mathcal K$, and a current data record $\mathcal D_T$, find an online input policy $\pi_T$ such that $u_T=\pi_T(\mathcal D_T)$ yields
            \begin{align*}
                \rho^\star_{T+1}\ge\rho^\star_T,
            \end{align*}
            for any  state $x_{T+1}=\Theta\begin{bmatrix} x^\top_T & u^\top_T \end{bmatrix}^\top+e$,  $\Theta\in\Sigma_T$ and $Y_T-\Theta W_T= \begin{bmatrix}
                E_T & e
            \end{bmatrix}\in\mathcal E_{T+1}$. 
    \end{problem}
    
    \subsection{Two key lemmas}\label{sec:2keylemmas}
    We establish two results instrumental to the development of our input design procedure. In the first one we relate the unifying formulation \eqref{eq18} for the informativity condition with $T+1$ data samples to the one with $T$ samples. To this purpose, we make the following additional assumption.
    \begin{assumption}[Bound on individual noise samples]\label{ass:indivbound}
        For each $T$ there exists a known matrix $\Omega_T=\Delta_{T+1}-\Delta_T$ such that the disturbance at time $T$ satisfies 
        \[
        e_Te_T^\top \preceq\Omega_T\; .
        \]
    \end{assumption}
    \smallskip
    
    Denote the input at time $T$ by $u:=u(T)$; note that at time $T$ the state value $x_T:=x(T)$ is known and that  $x_{T+1}:=x(T+1)$ depends on it and on $u$. Recall that the symbol $\mathcal{U}$ denotes the set of admissible inputs, described by any combination of the constraints \eqref{eq:mathcalU}. For $u\in\mathcal U$, denote 
    \begin{align}
        w_T(u):=\begin{bmatrix}x_T\\u\end{bmatrix}\in\mathbb R^p \quad \mbox{\rm and } \quad y:=x_{T+1}.
    \end{align}
    
    Define
    \begin{align}\label{eq:h&R}
        q_T(u,y)&:=\begin{bmatrix}y\\-w_T(u)\end{bmatrix}\in\mathbb R^{n+p}, \qquad R_T^N:= \begin{bmatrix}
            \Omega_T & 0 \\
            0 & 0_p
        \end{bmatrix},\nonumber\\
        h_T(u,y)&:=Oq_T(u,y),\qquad R_T:=OR_T^NO^{\top}.
    \end{align}
    \begin{lemma}\label{lemma1}
        Assume the informativity condition at time $T$ is of the form \eqref{eq18}. Define $h_T(u,y)$ and $R_T$ by \eqref{eq:h&R}. Then for any  $(\vartheta,\alpha,\rho)\in\mathcal{K}$,
        \begin{align}
            \mathcal C_{T+1}(\vartheta,\alpha,\rho) = \mathcal C_T(\vartheta,\alpha,\rho) +\alpha h_T(u,y)h_T(u,y)^{\top} -\alpha R_T. \label{eq33}
        \end{align}
    \end{lemma}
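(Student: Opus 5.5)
Since $\mathcal F(\vartheta,\rho)$ and $O$ do not depend on the data, the identity \eqref{eq33} reduces to a statement about $N_{T+1}$ versus $N_T$. Subtracting the two instances of \eqref{eq18} gives
\[
\mathcal C_{T+1}(\vartheta,\alpha,\rho)-\mathcal C_T(\vartheta,\alpha,\rho) = -\alpha\, O\,(N_{T+1}-N_T)\,O^\top .
\]
So it suffices to show that
\[
N_{T+1}=N_T - q_T(u,y)q_T(u,y)^\top + R_T^N .
\]
Conjugating this by $O$ and multiplying by $-\alpha$ then yields $\alpha h_Th_T^\top-\alpha R_T$, using $h_T=Oq_T$ and $R_T=OR_T^NO^\top$ from \eqref{eq:h&R}. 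The membership $(\vartheta,\alpha,\rho)\in\mathcal K$ plays no role beyond fixing the arguments.

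To prove the identity for $N_{T+1}$, I would use the second expression in \eqref{eq13}. Write $M_T:=\begin{bmatrix}Y_T^\top & -X_T^\top & -U_T^\top\end{bmatrix}^\top\in\R^{(n+p)\times T}$, so that $N_T=\operatorname{diag}(\Delta_T,0_p)-M_TM_T^\top$. Appending the sample at time $T$, the data matrices become $X_{T+1}=[X_T\ x_T]$, $U_{T+1}=[U_T\ u]$ and $Y_{T+1}=[Y_T\ y]$ with $y=x_{T+1}$. Hence $M_{T+1}=[M_T\ \ q_T(u,y)]$, because the new column is exactly $\begin{bmatrix}y^\top & -x_T^\top & -u^\top\end{bmatrix}^\top=q_T(u,y)$. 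It follows that
\[
M_{T+1}M_{T+1}^\top=M_TM_T^\top+q_Tq_T^\top .
\]
By Assumption \ref{ass:indivbound}, $\Delta_{T+1}=\Delta_T+\Omega_T$, so $\operatorname{diag}(\Delta_{T+1},0_p)=\operatorname{diag}(\Delta_T,0_p)+R_T^N$. Combining these two facts gives the required identity for $N_{T+1}$.

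There is no real obstacle; the argument is a bookkeeping exercise. The one point needing care is that the appended column must have the sign convention $[y;-w_T(u)]$, which matches the sign pattern of $G_T$. I would also note that the rank-one term enters $N_{T+1}$ with a minus sign, so it enters $\mathcal C_{T+1}$ with a plus sign after multiplication by $-\alpha$, which is the source of the $+\alpha h_Th_T^\top$ term.
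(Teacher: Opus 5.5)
Your proposal is correct and follows the same route as the paper: reduce to the update $N_{T+1}=N_T-q_T(u,y)q_T(u,y)^\top+R_T^N$, then conjugate by the data-independent $O$ and multiply by $-\alpha$. The paper only asserts that update as straightforward to verify. Your column-appending argument, using $\Delta_{T+1}=\Delta_T+\Omega_T$, supplies exactly the verification the paper omits.
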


    \begin{proof}
        Define $N_T$ as in \eqref{eq13}. It is straightforward to verify that for any $u$ and $y$, the following equation holds:
        \begin{align}
            N_{T+1}=N_T-q_T(u,y)q_T(u,y)^{\top}+R_T^N. \label{eq34}
        \end{align}
        Substituting \eqref{eq34}  into \eqref{eq18} and recalling that $\mathcal F$ is data-independent gives the result.
    \end{proof}
    \medskip 
    
    % \begin{corollary}
        %     If $\Omega_T=0$, then $R_T=0$. Any certificate that is currently feasible remains feasible after one more data point is added. Therefore, for any input and any future observation consistent with the data,
        %     \begin{align}
            %         \boxed{\rho_{T+1}^\star\ge\rho_T^\star.}
            %     \end{align}
        % \end{corollary}
    % By \eqref{eq33}, the new term is $\alpha h h^{\top}\succeq0$, while all other terms remain unchanged. In other words, without disturbances, performance does not decrease as more data are collected.
    
    \begin{corollary}
        If $e_T=0$, then  $        \rho_{T+1}^\star\ge\rho_T^\star$. 
    \end{corollary}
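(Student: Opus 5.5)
The plan is to apply Lemma~\ref{lemma1} and show that, when $e_T=0$, the feasible set of the supremum \eqref{eq20} can only grow from $T$ to $T+1$. First we fix how the corollary is read. If the disturbance at time $T$ is known to vanish, Assumption~\ref{ass:indivbound} holds with $\Omega_T=0$, i.e.\ $\Delta_{T+1}=\Delta_T$. Hence $R_T^N=0$ in \eqref{eq:h&R}, and so $R_T=OR_T^NO^\top=0$. I will state explicitly that this is the reading: the bound $\Omega_T$ carried by the data record is zero. If a strictly positive $\Omega_T$ were still imposed, the term $-\alpha R_T$ would survive and the claim could fail.

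With $R_T=0$, identity \eqref{eq33} gives, for every $(\vartheta,\alpha,\rho)\in\mathcal K$ and every admissible $u$ with $y=x_{T+1}$,
\begin{align*}
\mathcal C_{T+1}(\vartheta,\alpha,\rho)=\mathcal C_T(\vartheta,\alpha,\rho)+\alpha\, h_T(u,y)h_T(u,y)^\top .
\end{align*}
Every $\mathcal K$ in the examples has $\alpha\ge0$, and $\alpha$ is in any case the S-lemma multiplier. So $\alpha h_Th_T^\top\succeq0$, which gives $\mathcal C_{T+1}(\vartheta,\alpha,\rho)\succeq\mathcal C_T(\vartheta,\alpha,\rho)$ in the Loewner order. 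Therefore $\mathcal C_T\succeq0$ implies $\mathcal C_{T+1}\succeq0$. The same holds for the strict version $\succ0$ used in Examples~\ref{ex:QS} and~\ref{ex:H2}, since adding a positive semidefinite matrix to a positive definite one keeps it positive definite.

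Since $\mathcal K$ collects only data-independent constraints, it is the same set at times $T$ and $T+1$. Hence every triple feasible for \eqref{eq20} at time $T$ is also feasible at time $T+1$, so the supremum at $T+1$ is taken over a superset of the one at $T$. This gives $\rho^\star_{T+1}\ge\rho^\star_T$, with the usual conventions when the feasible set at $T$ is empty ($\rho^\star_T=-\infty$) or $\rho^\star_T=+\infty$. The bound holds for every choice of $u\in\mathcal U$, because $h_T(u,y)$ enters only through a positive semidefinite term.

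The only delicate step is the first one: justifying that $e_T=0$ allows $\Omega_T=0$, and hence $R_T=0$, within Assumptions~\ref{ass:noisebound} and~\ref{ass:indivbound}. Everything after that is a direct monotonicity argument.
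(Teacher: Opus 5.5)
Your proof is correct and follows the paper's argument: take $\Omega_T=0$ so that $R_T=0$, use Lemma~\ref{lemma1} to get $\mathcal C_{T+1}=\mathcal C_T+\alpha h_Th_T^\top\succeq\mathcal C_T$, and conclude that every certificate feasible for \eqref{eq20} at time $T$ stays feasible at $T+1$. Two of your additions tighten points the paper passes over silently: you state that the hypothesis must be read as $\Omega_T=0$ rather than merely $e_T=0$ with a nonzero bound, and you note that the strict inequalities in Examples~\ref{ex:QS} and~\ref{ex:H2} are preserved.
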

    
    \begin{proof}
        If there is no disturbance at time $T$ then   $\Omega_T=0$, and consequently $R_T=0$, see \eqref{eq:h&R}. Any certificate that is  feasible with $T$ data samples remains feasible after one more data point. Moreover, from \eqref{eq33} it follows that for any input and any future observation consistent with the data, $\mathcal{C}_{T+1}(\vartheta,\alpha,\rho)=\mathcal{C}_{T}(\vartheta,\alpha,\rho)+\alpha h h^{\top}\succeq0$, so the value of the performance index at time $T+1$ cannot decrease.
    \end{proof}
    \medskip

    \begin{remark}
            Lemma~\ref{lemma1} shows that collecting more data does not necessarily enlarge the LMI feasible set. To illustrate this, consider Example~\ref{ex:QS}. Repeated application of \eqref{eq34} gives, for any integer $T_0\ge1$,
            \begin{align*}
                N_{T+T_0}=&N_T- \sum_{i=0}^{T_0-1}\bigl(q(u_{T+i},x_{T+1+i})q(u_{T+i},x_{T+1+i})^{\top}\\
                &-R_{T+i}^N\bigr).
            \end{align*}
            With all other constraints unchanged, if the sum term on the right-hand side is positive semidefinite, every previously feasible solution remains feasible, and the corresponding quadratic stabilization guarantee is preserved. If the sum term is negative semidefinite, the inclusion is reversed. Thus, the effect of additional data on the LMI feasible set depends on how the contribution of the new data matrix compares with the increase in the noise bound. The numerical example in Section~\ref{sec:two-state system} confirms this conclusion.
            \hfill $\blacksquare$
    \end{remark}
    \medskip
    
    In the second instrumental result of this section we parametrize the states $x_{T+1}$ as a function of the input $u$ at time $T$ and the disturbance $e_T$.
    \begin{lemma} \label{lemma2}
        Recall the definitions \eqref{eq:ssellipsoid} and \eqref{eq:zeta&QT}. Let $u\in \mathcal{U}$, and define 
        \begin{align*}
            \bar y_T(u):=\bm \zeta_T^{\top} w_T(u) \quad \mbox{\rm and} \quad 
            \sigma_T^2(u):=w_T(u)^{\top}\mathbf A_T^{-1}w_T(u)\; .
        \end{align*}
        Let  $e$ satisfy Assumption \ref{ass:indivbound}; the following statements are equivalent:
        \begin{enumerate}
            \item There exist $\Theta\in\Sigma_T$ such that $y=\Theta w_T(u)+e$; 
            \item There exist $v$ such that $\|v\|\le\sigma_T(u)$ and 
            \begin{align}
                y=\bar y_T(u)+\mathbf Q_T^{1/2}v+e,  \label{eq36}
            \end{align}
        \end{enumerate}
    \end{lemma}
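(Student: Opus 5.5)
The plan is to use the matrix-ellipsoid description \eqref{eq:matrellips} of $\Sigma_T$ and reduce both directions to a statement about the image set $\{\Upsilon^\top a : \|\Upsilon\|\le 1\}$ for a fixed vector $a$. Since $e$ appears identically in both statements, it cancels. It therefore suffices to show that the set $\{\Theta w_T(u):\Theta\in\Sigma_T\}$ equals $\{\bar y_T(u)+\mathbf Q_T^{1/2}v:\|v\|\le\sigma_T(u)\}$.

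First, write $\Theta=Z^\top$ with $Z=\bm\zeta_T+\mathbf A_T^{-1/2}\Upsilon\mathbf Q_T^{1/2}$. Taking $\mathbf A_T^{-1/2}$ and $\mathbf Q_T^{1/2}$ symmetric (the positive semidefinite square roots, which exist since $\mathbf A_T\succ0$ by Assumption~\ref{ass:fullrank} and $\mathbf Q_T\succeq 0$), we get
\[
\Theta w_T(u)=\bm\zeta_T^\top w_T(u)+\mathbf Q_T^{1/2}\Upsilon^\top\mathbf A_T^{-1/2}w_T(u)=\bar y_T(u)+\mathbf Q_T^{1/2}\Upsilon^\top a,
\]
where $a:=\mathbf A_T^{-1/2}w_T(u)$ satisfies $\|a\|^2=w_T(u)^\top\mathbf A_T^{-1}w_T(u)=\sigma_T^2(u)$.

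For (1)$\Rightarrow$(2), set $v:=\Upsilon^\top a$. Then $\|v\|\le\|\Upsilon^\top\|\,\|a\|\le\sigma_T(u)$, because the spectral norm is invariant under transposition, and \eqref{eq36} holds.

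For (2)$\Rightarrow$(1), given $v$ with $\|v\|\le\sigma_T(u)$ we must build a contraction $\Upsilon$ with $\Upsilon^\top a=v$. If $a\neq0$, take the rank-one matrix $\Upsilon:=a v^\top/\|a\|^2\in\mathbb R^{p\times n}$. Then $\Upsilon^\top a=v$ and $\|\Upsilon\|=\|v\|/\|a\|\le1$. If $a=0$, then $\sigma_T(u)=0$, which forces $v=0$, and $\Upsilon=0$ works. In either case $\Theta:=(\bm\zeta_T+\mathbf A_T^{-1/2}\Upsilon\mathbf Q_T^{1/2})^\top\in\Sigma_T$ and $y=\Theta w_T(u)+e$.

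The only delicate point is this converse construction: we need a norm-one extension mapping the given $a$ to the prescribed $v$. It is the standard rank-one choice above, plus care with the degenerate case. Note that since $x_T$ enters $w_T(u)$, the case $a=0$ occurs only when $w_T(u)=0$. The remaining subtlety is justifying the symmetric square-root convention used in \eqref{eq:matrellips}; this is where I expect any friction, resolved by fixing the principal square roots.
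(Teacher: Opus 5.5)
Your proposal is correct and follows the paper's proof essentially step for step. The paper likewise uses the matrix-ellipsoid parametrization with $v:=\Upsilon^\top \mathbf A_T^{-1/2}w_T(u)$ in one direction, and the same rank-one contraction $\Upsilon = a v^\top/\|a\|^2$ (with $\Upsilon=0$ when $a=0$) in the other. Your explicit choice of symmetric square roots just makes precise a convention the paper uses implicitly when it transposes $Z$.
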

    
    \begin{proof}
        We prove the implication $1) \Longrightarrow 2)$. Since $\Theta$ is a data-consistent system it can be represented in the matrix ellipsoid form \eqref{eq:matrellips}; consequently $\Theta=Z^\top$ where $Z$ is of the form 
        \begin{align*}
            Z=\bm\zeta_T+\mathbf A_T^{-1/2}\Upsilon\mathbf Q_T^{1/2} 
        \end{align*}
        for some  $\Upsilon$ such that $\norm{\Upsilon}_2\leq 1$. Define $r:=\mathbf A_T^{-1/2}w_T(u)$ and $v:=\Upsilon^\top r$. Then
        \begin{align*}
            \Theta w_T(u)&=\bm\zeta_T^\top w_T(u)+\mathbf Q_T^{1/2}\Upsilon^\top\mathbf A_T^{-1/2}w_T(u)\\
            &=\bar y_T(u)+\mathbf Q_T^{1/2}v,
        \end{align*}
        and $\norm{v}_2\leq\norm{\Upsilon}_2\norm{r}_2\leq\sigma_T(u)$. Thus, $y$ is of the form \eqref{eq36}.
        
        We prove the implication $2) \Longrightarrow 1)$. Choose $v\in\R^n$ such that $\norm{v}_2\leq\sigma_T(u)$ and set $r:=\mathbf A_T^{-1/2}w_T(u)$. If $r=0$, then $v=0$, and we define $\Upsilon:=0$. If $r\neq0$, we define
        \begin{align*}
            \Upsilon:=\frac{rv^\top}{\norm{r}_2^2}.
        \end{align*}
        Then $\Upsilon^\top r=v$ and $\norm{\Upsilon}_2=\norm{v}_2/\norm{r}_2\leq1$. In either case, define
        \begin{align*}
            Z:=\bm\zeta_T+\mathbf A_T^{-1/2}\Upsilon\mathbf Q_T^{1/2},\qquad \Theta:=Z^\top.
        \end{align*}
        The matrix ellipsoid representation gives $\Theta\in\Sigma_T$; moreover 
        \begin{align*}
            \Theta w_T(u)=\bar y_T(u)+\mathbf Q_T^{1/2}\Upsilon^\top r=\bar y_T(u)+\mathbf Q_T^{1/2}v.
        \end{align*}
        This concludes the proof of the Lemma. 
    \end{proof}
    \medskip

    \section{Input design for performance improvement} \label{sec:perfimpro}
    Our algorithm for input-design is based on a two-stage computation. In the first stage, illustrated in Section \ref{sec:SOCP}, we solve a second-order cone program (SOCP) to compute from  the current certificate  a vector along which the increase of performance value is the smallest. We call such vector the ``weakest direction". In Section \ref{sec:SDP} we illustrate the second stage, where we use the weakest direction to compute an input that improves the optimal value of the certificate. 
    \subsection{Unified SOCP}\label{sec:SOCP}
    
    If the set 
    \[
    \left\{ (\vartheta,\alpha,\rho)\in\mathcal K \text{ such that }  \mathcal C_T(\vartheta,\alpha,\rho)\succeq 0\right\}\neq \varnothing\; ,
    \]
    then from the definition \eqref{eq20} of optimal performance index $\rho_T^\star$ it follows that for any $\varepsilon_T>0$,  any value $\bar\rho_T$ that satisfies 
    \begin{align}
        \rho_T^\star\ge \bar\rho_T\ge\rho_T^\star-\varepsilon_T \label{eq:barrhoT}\; ,
    \end{align}
    also corresponds to some \emph{feasible certificate}, i.e. a triple 
    $(\bar\vartheta_T,\bar\alpha_T,\bar\rho_T)$ 
    such that
    \begin{align}
        \mathcal C_T(\bar\vartheta_T,\bar\alpha_T,\bar\rho_T)\succeq0, \qquad (\bar\vartheta_T,\bar\alpha_T,\bar\rho_T)\in\mathcal K. \label{eq40}
    \end{align}
    Denote by $\lambda_T^{\min}$ the minimum eigenvalue of the current certificate, and by $\mathcal V_T$ the associated eigenspace: 
    \begin{align*}
        \lambda_T^{\min} &:= \lambda_{\min}(\mathcal C_T(\bar\vartheta_T,\bar\alpha_T,\bar\rho_T)),\\
        \mathcal V_T &:= \ker\bigl(\mathcal C_T(\bar\vartheta_T,\bar\alpha_T,\bar\rho_T)-\lambda_T^{\min}I\bigr).
    \end{align*}
    We call $\mathcal V_T$ is the \emph{weakest eigenspace} of the current certificate matrix. Every vector in such set corresponds to the worst performance attainable with the given certificate matrix. 
    
    We denote by $\bar z_T$ the \emph{weakest direction of the current certificate}, defined as follows. If $\dim\left(\mathcal{V}_T\right)=1$, we define $\bar z_T$ the unit-norm eigenvector associated with $\mathcal{V}_T$. If $\dim\left(\mathcal{V}_T\right)>1$, we use the following deterministic selection rule:
    \begin{align}
        \bar z_T \in \arg\max_{z\in\mathcal V_T,\ \|z\|=1} \|O^{\top} z\|. \label{eq42}
    \end{align}
    \begin{remark}[Computation of $\bar z_T$]\rm 
        Computing the set of solutions to \eqref{eq42} is a finite-dimensional eigenvalue problem on $\mathcal V_T$. Indeed, if the columns of $V_T$ form an orthonormal basis of $\mathcal V_T$, we may choose
        \begin{align*}
            \bar z_T=V_T\xi_T,
        \end{align*}
        where $\xi_T$ is a unit eigenvector associated with the largest eigenvalue of $V_T^{\top} OO^{\top} V_T$. 
        \hfill $\blacksquare $
    \end{remark}
    \medskip
    % \textcolor{red}{{\bf PAOLO:} I add a remark below to explain how $\bar{\rho}_T$ is obtained without computing the supremum in \eqref{eq20} exactly.}

    \begin{remark}[Computation of $\bar \rho_T$]\label{rem:brhoT}\rm 
            We show that to compute a feasible value $\bar\rho_T$ satisfying \eqref{eq:barrhoT} there is no need to compute the supremum in \eqref{eq20} to verify that the inequalities are satisfied.  
            Fix $\varepsilon_T>0$ and choose a certificate satisfying \eqref{eq40} with $\bar\alpha_T>0$. Test the feasibility of the problem 
            \begin{align*}
                (\vartheta,\alpha,\rho)\in\mathcal K,\quad
                \mathcal C_T(\vartheta,\alpha,\rho)\succeq0,\quad
                \rho\ge\bar\rho_T+\varepsilon_T.
            \end{align*}
            If such problem is infeasible, then
            $\bar\rho_T\le\rho_T^\star\le\bar\rho_T+\varepsilon_T$,
            so \eqref{eq:barrhoT} holds.
            Otherwise, choose a feasible solution
            $(\vartheta^+,\alpha^+,\rho^+)$, update
            \begin{align*}
                (\bar\vartheta_T,\bar\alpha_T,\bar\rho_T)\leftarrow
                \tfrac12(\bar\vartheta_T+\vartheta^+,\bar\alpha_T+\alpha^+,\bar\rho_T+\rho^+),
            \end{align*}
            and repeat. By convexity, each update preserves feasibility and the corresponding $\bar\alpha_T>0$, while increasing $\bar\rho_T$ by at least
            $\varepsilon_T/2$. Since $\rho_T^\star$ is finite, the procedure terminates after finitely many updates.
            \hfill $\blacksquare$
    \end{remark}
    %\textcolor{red}{YISHU: I think what is said in this remark is correct.}
    \medskip
    
    Recall the definition of $O$ as the zero-padding matrix associated with the certificate \eqref{eq18}, and define 
    \begin{align}
        O^{\top}\bar z_T =: \begin{bmatrix}
            \bar a_T \\
            \bar b_T
        \end{bmatrix}, \qquad \bar a_T\in\mathbb R^n, \quad \bar b_T\in\mathbb R^p.
    \end{align}
    Multiplying \eqref{eq36} on the left by $\bar a_T^\top$ and subtracting $\bar b_T^\top w_T(u)$ from both sides of the equality we obtain 
    \begin{align}\label{eq:aT&bT}
        \bar a_T^{\top} y -\bar b_T^{\top} w_T(u) =& \left(\bm \zeta_T\bar a_T-\bar b_T\right)^{\top} w_T(u) \nonumber \\
        &+\bar a_T^{\top}\mathbf Q_T^{1/2}v+\bar a_T^{\top} e_T.
    \end{align}
    Define
    \begin{align}\label{eq:cT&etaT}
        \bar c_T &:= \bm \zeta_T\bar a_T-\bar b_T,\\
        \eta_T &:= \|\mathbf Q_T^{1/2}\bar a_T\|\; ,\nonumber
    \end{align}
    and define for a fixed $s\in\{+1,-1\}$, the following SOCP:
    \begin{align}
        \begin{aligned}
            \max_{u,t,\mu}\quad&\mu\\
            \text{s.t.}\quad
            &u\in\mathcal U,\\
            &\|\mathbf A_T^{-1/2}w_T(u)\|\leq t,\\
            &s~\bar c_T^{\top} w_T(u) -\eta_T~t -\sqrt{\bar a_T^{\top}\Omega_T\bar a_T} \geq\mu\; .
        \end{aligned}
        \tag{SOCP} \label{eq46}
    \end{align}
    
    To compute a weakest direction we solve \eqref{eq46} for $s=+1$ and $s=-1$ separately; we then choose any problem corresponding to the larger optimal value $\mu$; and we denote the corresponding input value by $u_T^{\rm fd}$.
    \begin{remark}%\rm 
        Note that the solution of the problem \ref{eq46} need not be unique, and consequently there may be more than one optimal input.\hfill $\blacksquare$
    \end{remark}
    \medskip 
    
    In the next result we show that  a suitable choice of the input provides a lower bound on the contribution of the new data in the weakest direction; moreover, we show that under suitable conditions the given certificate margin improves in the weakest direction. 
    \begin{theorem} \label{theorem1}
        Suppose that \eqref{eq40} holds, $\bar\alpha_T>0$, and $\bar z_T$ is selected according to \eqref{eq42}. If \eqref{eq46} has a solution $(u_T^{\rm fd},t_T,\mu_T)$ with $\mu_T\geq0$ for some $s\in\{+1,-1\}$, then for every $\Theta\in\Sigma_T$ and every future disturbance satisfying $e_Te_T^{\top}\preceq\Omega_T$ the following inequality holds: 
        \begin{align*}
            \left| \bar a_T^{\top} \bigl(\Theta w(u_T^{\rm fd})+e_T\bigr) -\bar b_T^{\top} w(u_T^{\rm fd}) \right| \geq\mu_T\; .     
        \end{align*}
        Moreover, if
        \begin{align*}
            \mu_T\ge \sqrt{\bar a_T^{\top}\Omega_T\bar a_T},
        \end{align*}
        then    
        \begin{align}\label{eq:weakimprovement}
            \bar z_T^{\top} \mathcal C_{T+1}(\bar\vartheta_T,\bar\alpha_T,\bar\rho_T) \bar z_T\ge \bar z_T^{\top} \mathcal C_T(\bar\vartheta_T,\bar\alpha_T,\bar\rho_T) \bar z_T\; .
        \end{align}
    \end{theorem}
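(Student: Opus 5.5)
The plan is to combine Lemma \ref{lemma2}, which parametrizes every admissible next state, with the identity \eqref{eq:aT&bT}, and then bound each uncertain term separately. Fix $\Theta\in\Sigma_T$ and $e_T$ with $e_Te_T^\top\preceq\Omega_T$, and write $u:=u_T^{\rm fd}$, $w:=w_T(u)$. By Lemma \ref{lemma2} there is $v$ with $\|v\|\le\sigma_T(u)$ and $\Theta w=\bar y_T(u)+\mathbf Q_T^{1/2}v$. Substituting into \eqref{eq:aT&bT} and using \eqref{eq:cT&etaT}, the quantity of interest becomes
\begin{align*}
\delta:=\bar a_T^\top(\Theta w+e_T)-\bar b_T^\top w=\bar c_T^\top w+\bar a_T^\top\mathbf Q_T^{1/2}v+\bar a_T^\top e_T .
\end{align*}

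Next I bound the two uncertain terms.
\begin{itemize}
\item By Cauchy--Schwarz, $|\bar a_T^\top\mathbf Q_T^{1/2}v|\le\eta_T\|v\|\le\eta_T\sigma_T(u)$.
\item Since $\sigma_T(u)=\|\mathbf A_T^{-1/2}w\|$, the cone constraint of \eqref{eq46} gives $\sigma_T(u)\le t_T$.
\item Since $e_Te_T^\top\preceq\Omega_T$, we have $(\bar a_T^\top e_T)^2=\bar a_T^\top e_Te_T^\top\bar a_T\le\bar a_T^\top\Omega_T\bar a_T$, so $|\bar a_T^\top e_T|\le\sqrt{\bar a_T^\top\Omega_T\bar a_T}$.
\end{itemize}
Multiplying $\delta$ by $s$ and using the last constraint of \eqref{eq46} then gives
\begin{align*}
s\,\delta\ \ge\ s\,\bar c_T^\top w-\eta_T t_T-\sqrt{\bar a_T^\top\Omega_T\bar a_T}\ \ge\ \mu_T\ \ge\ 0 .
\end{align*}
Hence $|\delta|\ge s\,\delta\ge\mu_T$, which is the first claim.

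For the second claim I apply Lemma \ref{lemma1} with $y=\Theta w+e_T$ and the certificate $(\bar\vartheta_T,\bar\alpha_T,\bar\rho_T)\in\mathcal K$. Pre- and post-multiplying \eqref{eq33} by $\bar z_T$ gives
\begin{align*}
\bar z_T^\top\mathcal C_{T+1}\bar z_T-\bar z_T^\top\mathcal C_T\bar z_T=\bar\alpha_T\bigl((\bar z_T^\top h_T)^2-\bar z_T^\top R_T\bar z_T\bigr).
\end{align*}
Since $h_T=Oq_T$ and $O^\top\bar z_T=[\bar a_T^\top\ \bar b_T^\top]^\top$, we get $\bar z_T^\top h_T=\bar a_T^\top y-\bar b_T^\top w=\delta$. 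Likewise, since $R_T=OR_T^NO^\top$, we get $\bar z_T^\top R_T\bar z_T=\bar a_T^\top\Omega_T\bar a_T$. So the difference equals $\bar\alpha_T(\delta^2-\bar a_T^\top\Omega_T\bar a_T)$. Because $\bar\alpha_T>0$ and $|\delta|\ge\mu_T\ge\sqrt{\bar a_T^\top\Omega_T\bar a_T}$, this difference is nonnegative, which yields \eqref{eq:weakimprovement}.

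The only delicate step is the sign bookkeeping in $q_T$ and $O$. Here $q_T=[y^\top\ -w^\top]^\top$, so the inner product of $O^\top\bar z_T$ with $q_T$ is $\bar a_T^\top y-\bar b_T^\top w$, which matches the sign convention in \eqref{eq:aT&bT}. The second delicate point is that the inequality of \eqref{eq46} must hold for the actual optimal $t_T$, which is what allows replacing $\sigma_T(u)$ by $t_T$. Beyond these, the argument is routine.
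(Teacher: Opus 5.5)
Your proposal is correct and follows essentially the same route as the paper: Lemma~\ref{lemma2} parametrizes the next state, termwise bounds on the $v$ and $e_T$ contributions give the first claim, and Lemma~\ref{lemma1} projected onto $\bar z_T$ gives the second. Your explicit use of the cone constraint $\sigma_T(u)\le t_T$, together with $\eta_T\ge 0$, spells out the step the paper only summarizes as ``evaluating this bound at the selected optimizer''; note that this step needs only feasibility of $(u_T^{\rm fd},t_T,\mu_T)$, not optimality.
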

    
    \begin{proof}
        Fix $u\in\mathcal U$ and let $y$ be a corresponding admissible next value of the state. By Lemma~\ref{lemma2}, there exist $v,e\in\R^n$ such that
        \begin{align*}
            y=\bar y_T(u)+\mathbf Q_T^{1/2}v+e,\quad \norm{v}_2\leq\sigma_T(u),\quad ee^\top\preceq\Omega_T.
        \end{align*}
        For each $s\in\{-1,1\}$, from the positions \eqref{eq:cT&etaT} and \eqref{eq:aT&bT} we conclude that
        \begin{align*}
            s\left(\bar a_T^\top y-\bar b_T^\top w_T(u)\right)&=s\bar c_T^\top w_T(u)+s\bar a_T^\top\mathbf Q_T^{1/2}v+s\bar a_T^\top e\\
            &\geq s\bar c_T^\top w_T(u)-\eta_T\sigma_T(u)-\sqrt{\bar a_T^\top\Omega_T\bar a_T}\; ,
        \end{align*}
        since $(\bar a_T^\top e)^2\leq\bar a_T^\top\Omega_T\bar a_T$ given that $ee^\top\preceq\Omega_T$. Evaluating this bound at the selected optimizer of \eqref{eq46} and combining with  $\mu_T\geq0$ proves the first inequality.
        
        To prove the rest of the claim recall the definitions \eqref{eq:h&R} and write 
        \begin{align*}
            \bar z_T^\top h_T(u,y)&=\bar a_T^\top y-\bar b_T^\top w_T(u)\\
            \bar z_T^\top R_T\bar z_T&=\bar a_T^\top\Omega_T\bar a_T\; .
        \end{align*}
        Applying Lemma~\ref{lemma1} to the reference solution gives
        \begin{align*}
            &\bar z_T^\top\left(\mathcal C_{T+1}(\bar\vartheta_T,\bar\alpha_T,\bar \rho_T)-\mathcal C_T(\bar\vartheta_T,\bar\alpha_T,\bar \rho_T)\right)\bar z_T\\
            =&\bar\alpha_T\left(\left|\bar z_T^\top h_T(u_T^{\rm fd},y)\right|^2-\mu^2_T+\mu^2_T-\bar a_T^\top\Omega_T\bar a_T\right)\; .
        \end{align*}
        Now use the majorization on $\mu_T$ established in the first part of the proof; the assumption $ \mu_T\ge \sqrt{\bar a_T^{\top}\Omega_T\bar a_T}$;  and the fact that the certificate is feasible and consequently that $\bar\alpha_T\geq0$, to conclude that
        \[
        \bar\alpha_T\left(\left|\bar z_T^\top h_T(u_T^{\rm fd},y)\right|^2-\bar a_T^\top\Omega_T\bar a_T\right)\ge0 \; ,
        \]
        and consequently that \eqref{eq:weakimprovement} is satisfied.
    \end{proof}
    \medskip
    
    Theorem \ref{theorem1} only guarantees improvement \emph{in the weakest direction} corresponding to a fixed reference certificate; it does \emph{not} imply that there is a corresponding improvement \emph{in the performance inde}x, i.e. that $\rho_{T+1}^\star\ge\rho_T^\star$. For example, consider the  case when $\mathcal{V}_T\subseteq\ker O^{\top}$; in such case $\bar a_T$ and $\bar b_T$ are zero, and \eqref{eq46} yields $\mu_T=0$. 
    % Also, if the current performance is mainly limited by a data-independent side constraint, improving a direction of the main criterion may still fail to improve the optimal value. 
    To guarantee that the overall performance does not decrease, or that it strictly increases, we  need the condition derived  in the next subsection.

    \subsection{Unified SDP}\label{sec:SDP}
    To prove the main result of this section we need three auxiliary results. We first define the  output-injection matrix
    \begin{align}
        J_N:=\begin{bmatrix}I_n\\0_{p\times n}\end{bmatrix}, \qquad J:=OJ_N.
    \end{align}
    Using the characterization \eqref{eq36} obtained in Lemma \ref{lemma2} we write  $h_T(u,y)=h_0(u)+\Delta h$, 
    where
    \begin{align}\label{eq:h0&Deltah}
        h_0(u):=O\begin{bmatrix}\bar y_T(u)\\-w_T(u)\end{bmatrix}, \quad 
        \Delta h:=J\bigl(\mathbf Q_T^{1/2}v+e_T\bigr)\; .
    \end{align}
    Define
    \begin{eqnarray}\label{eq:Qh&Rh}
        \mathbf Q_h&:=&J\mathbf Q_TJ^{\top}\succeq0,\nonumber\\
        \mathbf R_h&:=&J\Omega_TJ^{\top}\succeq0
    \end{eqnarray}
    and define $L_T:=J\Omega_T^{1/2}$; note that $\mathbf R_h=L_TL_T^{\top}$.

    Our first auxiliary result concerns an upper bound on the matrix $\Delta h \Delta h^\top$. 
    \begin{lemma} \label{lemma3}
        Define $\mathbf Q_h$ and $\mathbf R_h$ by \eqref{eq:Qh&Rh}, and $\sigma_T(u)$ as in Lemma \ref{lemma2}.  The following inequality holds: 
        \begin{align}
            \Delta h\Delta h^{\top} \preceq 2\sigma_T^2(u)\mathbf Q_h+2\mathbf R_h.
        \end{align}
    \end{lemma}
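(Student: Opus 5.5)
The plan is to reduce the claim to two elementary matrix inequalities: a Young-type bound on the outer product of a sum, and the bounds $vv^\top \preceq \sigma_T^2(u) I$ and $e_Te_T^\top\preceq\Omega_T$. By \eqref{eq:h0&Deltah} we have $\Delta h = J a + J b$ with $a:=\mathbf Q_T^{1/2}v$ and $b:=e_T$. Here $v$ is the vector produced by Lemma~\ref{lemma2}, so $\|v\|\le\sigma_T(u)$, and $e_T$ satisfies Assumption~\ref{ass:indivbound}.

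The first step is the inequality $(x+y)(x+y)^\top\preceq 2xx^\top+2yy^\top$ for any vectors $x,y$ of equal size. It follows from
\[
2xx^\top+2yy^\top-(x+y)(x+y)^\top=(x-y)(x-y)^\top\succeq 0 .
\]
Applying it with $x=Ja$ and $y=Jb$ gives
\[
\Delta h\Delta h^\top\preceq 2J\mathbf Q_T^{1/2}vv^\top\mathbf Q_T^{1/2}J^\top+2Je_Te_T^\top J^\top .
\]

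The second step bounds each term. Since $vv^\top\preceq\|v\|^2 I_n\preceq\sigma_T^2(u)I_n$, the congruence $M\mapsto J\mathbf Q_T^{1/2}M\mathbf Q_T^{1/2}J^\top$ preserves the Loewner order. This yields $J\mathbf Q_T^{1/2}vv^\top\mathbf Q_T^{1/2}J^\top\preceq\sigma_T^2(u)J\mathbf Q_TJ^\top=\sigma_T^2(u)\mathbf Q_h$, where we use that $\mathbf Q_T^{1/2}$ is taken symmetric, so that $\mathbf Q_T^{1/2}\mathbf Q_T^{1/2}=\mathbf Q_T$. Likewise, congruence applied to $e_Te_T^\top\preceq\Omega_T$ gives $Je_Te_T^\top J^\top\preceq J\Omega_TJ^\top=\mathbf R_h$. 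Summing the two bounds yields the claim.

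The only point needing care is the symmetry of $\mathbf Q_T^{1/2}$. The Notation section only requires $Q^{1/2}Q^{1/2}=Q$, so I would fix the symmetric positive-semidefinite square root, consistently with its use in \eqref{eq:matrellips} and Lemma~\ref{lemma2}. With that choice $(\mathbf Q_T^{1/2})^\top=\mathbf Q_T^{1/2}$, and the congruence step produces exactly $\mathbf Q_T$. Everything else is routine.
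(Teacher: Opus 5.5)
Your proposal is correct and follows the paper's proof essentially verbatim: the bound $(a+b)(a+b)^\top\preceq 2(aa^\top+bb^\top)$, then $vv^\top\preceq\sigma_T^2(u)I_n$ and $e_Te_T^\top\preceq\Omega_T$, each pushed through congruence with $J\mathbf Q_T^{1/2}$ and $J$. You also make two points explicit that the paper leaves implicit: you justify the elementary inequality via $(x-y)(x-y)^\top\succeq 0$, and you fix the symmetric square root of $\mathbf Q_T$, which the paper also tacitly uses here and in Lemma~\ref{lemma2}.
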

    
    \begin{proof}
        Recall that for any pair of vectors $a$, $b$ it holds that 
        \[
        (a+b)(a+b)^\top \preceq 2 (a a^\top +b b^\top)\; .
        \]
        Use such inequality and the definition of $\Delta h$ to conclude that 
        \[
        \Delta h\Delta h^{\top} \preceq 2 J \left(\mathbf Q_T^\frac{1}{2}v v^\top \mathbf Q_T^\frac{1}{2}+ e e^\top\right)J^\top \; .
        \]
        Since $v v^\top \preceq \sigma_T(u)^2 I$ and $e e^\top \preceq \Omega_T$, it follows  that 
        \[
        \Delta h\Delta h^{\top} \preceq 2 \sigma_T(u)^2 \mathbf Q_h+ 2 \mathbf R_h\; ;
        \]
        this concludes the proof. 
    \end{proof}
    \medskip
    
    We now define a \emph{cover} of the set of possible  errors $y-\bar y_T(u)$ corresponding to the ``prediction equation" \eqref{eq36}. We construct such cover with  smaller, possibly overlapping, ellipsoids. Later in this section we use such sets to compute a bound on the performance improvement on each of such sub-regions given, eventually selecting for input design the sub-region corresponding with the best such bound.

    Recall the definition of $\sigma_T(u)$ in Lemma \ref{lemma2}, and choose a finite bound $\bar\sigma_T$ satisfying
    \begin{align*}
        \sigma_T(u)\le\bar\sigma_T\qquad\text{for every }u\in\mathcal U.
    \end{align*}
    Note that such bound exists since $\mathcal U$ is compact, see Assumption \ref{ass:compact}. For example, if $\mathcal{U}$ is defined by the 2-norm bound $\|u\|_2\le\bar u_2$, using the triangle inequality we conclude that $\bar\sigma_T$ can be chosen as 
    \begin{align*}
        \bar\sigma_T:={}&\left\|\mathbf A_T^{-1/2}
        \begin{bmatrix}x_T\\0\end{bmatrix}\right\|_2
        +\bar u_2\left\|\mathbf A_T^{-1/2}
        \begin{bmatrix}0\\I_m\end{bmatrix}\right\|_2.
    \end{align*}
    By Lemma~\ref{lemma2}, the  ``prediction error" $d:=y-\bar y_T(u)$ between the observed  and the future state corresponding to $u_T$, $x_T$ and $e_T$ belongs to the compact set
    \begin{align*}
        \mathcal R_T:=\{\mathbf Q_T^{1/2}v+\Omega_T^{1/2}z: \|v\|_2\le\bar\sigma_T,\ \|z\|_2\le1\}, 
    \end{align*}
    the sum of two ellipsoids. 
    
    %Before solving the input design problem, 
    Now fix centers $c_j\in\mathbb R^n$ and matrices $D_j=D_j^\top\succ0$, $j=1,\ldots,M_T$, such that
        \begin{align}
            \mathcal R_T\subseteq\bigcup_{j=1}^{M_T}\mathcal R_{T,j},
            \qquad
            \mathcal R_{T,j}:=\{c_j+D_jz:\|z\|_2\le1\}. \label{eq:cover}
        \end{align}
        It can be shown that a finite cover always exists; for instance, a bounded box containing $\mathcal R_T$ can be covered by finitely many Euclidean balls (see Appendix~\ref{app:cover} for a construction of such type). The centers and matrices are fixed coefficients of the SDP. For each region define
        \begin{align}
            h_{0,j}(u)&:=h_0(u)+J_Nc_j,\nonumber\\
            B_j&:=JD_j,\nonumber\\
            \bar h_j&:=h_{0,j}(u_T^{\rm fd}).\label{eq:localreference}
    \end{align}
    
    \begin{lemma}\label{lemma4add}
        If $d=y-\bar y_T(u)\in\mathcal R_{T,j}$  for some $j\in\{1,\ldots,M_T\}$, then 
        there exists $z$ such that $\|z\|_2\le1$ and  
        \begin{align}
            h_T(u,y)=h_{0,j}(u)+\Delta h_j,
        \end{align}
        where $\Delta h_j:=B_jz$. Moreover, $\Delta h_j\Delta h_j^\top\preceq B_jB_j^\top$.
    \end{lemma}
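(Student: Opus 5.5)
The argument is a direct unpacking of definitions; no earlier lemma beyond the structure of $h_T$ is needed. We first write $h_T(u,y)$ in terms of the prediction error $d=y-\bar y_T(u)$. By \eqref{eq:h&R},
\[
q_T(u,y)=\begin{bmatrix}y\\-w_T(u)\end{bmatrix}=\begin{bmatrix}\bar y_T(u)\\-w_T(u)\end{bmatrix}+J_N d .
\]
Hence, using \eqref{eq:h0&Deltah} and $J=OJ_N$,
\[
h_T(u,y)=Oq_T(u,y)=h_0(u)+Jd .
\]
Next we use the hypothesis $d\in\mathcal R_{T,j}$. By \eqref{eq:cover} there is $z$ with $\|z\|_2\le1$ and $d=c_j+D_jz$. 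Substituting gives
\[
h_T(u,y)=h_0(u)+Jc_j+JD_jz .
\]

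One point needs care. The definition \eqref{eq:localreference} writes $h_{0,j}(u)=h_0(u)+J_Nc_j$, but dimensionally the shift must live in the range of $O$. So the intended term is $OJ_Nc_j=Jc_j$; alternatively, this is automatic when $O$ acts as the identity on the first block, as in the zero-padding examples. I would state this interpretation explicitly. With it, the display reads $h_T(u,y)=h_{0,j}(u)+B_jz$ with $B_j=JD_j$, so setting $\Delta h_j:=B_jz$ proves the first claim.

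For the matrix bound, we have $\Delta h_j\Delta h_j^\top=B_j(zz^\top)B_j^\top$. Since $\|z\|_2\le1$, we get $zz^\top\preceq\|z\|_2^2I\preceq I$. Congruence with $B_j$ preserves the Loewner order, which gives $\Delta h_j\Delta h_j^\top\preceq B_jB_j^\top$.

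The only real obstacle is the notational mismatch between $J_Nc_j$ and $Jc_j$ just noted. The rest is routine substitution.
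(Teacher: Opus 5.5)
Your proof is correct and follows the paper's argument exactly: write $h_T(u,y)=h_0(u)+Jd$, substitute $d=c_j+D_jz$, and use $zz^\top\preceq I_n$ together with congruence by $B_j$. Your reading of the shift $J_Nc_j$ as $Jc_j=OJ_Nc_j$ is also the intended one; the paper's proof uses $Jc_j$ without comment, which is consistent with \eqref{eq:scaledlocal}.
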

    \begin{proof}
        By the definition of $\mathcal R_{T,j}$, for every $d\in\mathcal{R}_{T,}j$ there exists $z$ with
        $\|z\|_2\le1$ such that $d=c_j+D_jz$. Hence
        \begin{align*}
            h_T(u,y)
            &=h_0(u)+Jd\\
            &=h_0(u)+Jc_j+JD_jz
            =h_{0,j}(u)+B_jz.
        \end{align*}
        Moreover, $\|z\|_2\le1$ implies $zz^\top\preceq I_n$, and therefore
        \begin{align*}
            \Delta h_j\Delta h_j^\top
            =B_jzz^\top B_j^\top
            \preceq B_jB_j^\top.
        \end{align*}
        The claim is proved.
    \end{proof}
        \medskip
    
    Now define the scaled set
    \begin{align}
        \widehat{\mathcal U} :=\left\{ (\alpha,\widehat u):\alpha>0,\ \widehat u/\alpha\in\mathcal U \right\}\; ,\label{eq:hatU}
    \end{align}
    and define
    \begin{align}
        \widehat u:=\alpha u, \qquad \widehat w:=\alpha~ w_T(u) =\begin{bmatrix}\alpha~ x_T\\\widehat u\end{bmatrix}\; .
    \end{align}
    Note that the constraint $(\alpha,\widehat u)\in\widehat{\mathcal U}$ is convex. If
    $\alpha>0$, the value of the physical input $u$ can be recovered as $u=\widehat u/\alpha$.  Recall the definition of $h_0$ in \eqref{eq:h0&Deltah} and define 
    \begin{align}\label{eq:wh0&Deltah}
        \widehat h_0 :=\alpha h_0(u) =O\begin{bmatrix}
            \bm \zeta_T^{\top}\widehat w \\
            -\widehat w
        \end{bmatrix}\; ;
    \end{align}
    note that $\widehat h_0$ is affine in $(\alpha,\widehat u)$.
    
    \begin{remark}[Perspective constraints]\rm 
        If 
        \[
        \mathcal U=\{u:\|u\|_\infty\le\bar u\}\; ,
        \]
        then the  perspective constraint (constraint on the scaled set) is
        $|\widehat u_i|\le\alpha\bar u_i$. If
        \[
        \mathcal U=\{u:(u-u_c)^{\top} H_u^{-1}(u-u_c)\le1\}\; , 
        \]
        then the constraint is $
        \|H_u^{-1/2}(\widehat u-\alpha u_c)\|_2\le\alpha$. \hfill $\blacksquare$
    \end{remark}
    \medskip
    
    We now prove another instrumental result. For each region in \eqref{eq:cover}, we first fix a reference vector $\bar h_j$ for the convex subproblem with $T$ data samples; for example, $\bar h_j$ can be  computed from the SOCP input $u_T^{\rm fd}$ of Theorem \ref{theorem1} as $\bar h_j=h_{0,j}\left(u_T^{\rm fd}\right)$.
    
    %\begin{lemma} \label{lemma4}
    %For any $\theta>0$ and any $h_T=h_T(u,y)$, the following inequality holds: 
    %\begin{align}
    %\alpha h_T h_T^{\top} \succeq \operatorname{He}(\bar h_T\widehat h_0^{\top}) -(\alpha+\theta)\bar h_T\bar h_T^{\top} -\frac{\alpha^2}{\theta}\Delta h\Delta h^{\top}.
    %\end{align}
    %\end{lemma}
    
    % \begin{proof}
        % From $\alpha(h_T-\bar h_T)(h_T-\bar h_T)^\top\succeq0$ we conclude that 
        % \begin{align*}
            %     \alpha~ h_T h_T^\top&\succeq\He\!\left(\alpha~\bar h_Th_T^\top\right)-\alpha~\bar h_T\bar h_T^\top\\
            %     &=\He\!\left(\bar h_T\widehat h_{0,T}^\top\right)+\He\!\left(\alpha\bar h_T\Delta h_T^\top\right)-\alpha~\bar h_T\bar h_T^\top.
            % \end{align*}
        % Using the matrix Young inequality we conclude that for every $\theta>0$ the inequality 
        % \begin{align*}
            %     \He\!\left(\alpha\bar h_T\Delta h_T^\top\right)\succeq-\theta\bar h_T\bar h_T^\top-\frac{\alpha^2}{\theta}\Delta h_T\Delta h_T^\top \; ,
            % \end{align*}
        % holds. Consequently 
        % \begin{align}
            %     \alpha h_T(u,y)h_T(u,y)^\top\succeq{}&\He\!\left(\bar h_T\widehat h_{0,T}^\top\right)-(\alpha+\theta)\bar h_T\bar h_T^\top\notag\\
            %     &-\frac{\alpha^2}{\theta}\Delta h_T\Delta h_T^\top\; ;\label{theorem2.2}
            % \end{align}
        % the claim is proved. 
        % \end{proof}
    % \medskip 
    
    For each region in \eqref{eq:cover}, the scaled center is
    \begin{align}
        \widehat h_{0,j}:=\widehat h_0+\alpha Jc_j =\alpha h_{0,j}(u), \label{eq:scaledlocal}
    \end{align}
    which is affine in $(\alpha,\widehat u)$.
    \begin{lemma}\label{lemma4}
        Assume that $d=y-\bar y_T(u)\in\mathcal R_{T,j}$ for some $j\in\{1,\ldots,M_T\}$. The following inequality holds for every $\theta_j>0$: 
        \begin{align}
            \alpha hh^\top\succeq{} \operatorname{He}(\bar h_j\widehat h_{0,j}^\top) -(\alpha+\theta_j)\bar h_j\bar h_j^\top-\frac{\alpha^2}{\theta_j}B_jB_j^\top. \label{eq62}
        \end{align}
    \end{lemma}
    \begin{proof}
        Since $\alpha(h-\bar h_j)(h-\bar h_j)^\top\succeq0$,
        \begin{align*}
            \alpha hh^\top\succeq{}& \operatorname{He}(\bar h_j\widehat h_{0,j}^\top) -\alpha\bar h_j\bar h_j^\top +\operatorname{He}(\alpha\bar h_j\Delta h_j^\top).
        \end{align*}
        Young's matrix inequality gives
        \begin{align*}
            \operatorname{He}(\alpha\bar h_j\Delta h_j^\top) \succeq-\theta_j\bar h_j\bar h_j^\top-\frac{\alpha^2}{\theta_j}\Delta h_j\Delta h_j^\top.
        \end{align*}
        Now use the majorization $\Delta h_j\Delta h_j^\top\preceq B_jB_j^\top$ to conclude that \eqref{eq62} holds.
    \end{proof}

    Using Lemmas \ref{lemma3} -- \ref{lemma4} we are now in a position to prove the main result of this section. 
    % We first choose $\kappa$ such that 
    % \begin{align}
        %     \begin{bmatrix}
            %         \kappa & \widehat w^{\top} \\
            %         \widehat w & \theta\mathbf A_T
            %     \end{bmatrix}\succeq0.
        % \end{align}
    % Since $\mathbf A_T\succ0$, a Schur complement argument shows that 
    % \begin{align}
        %     \kappa \ge \frac{\alpha^2}{\theta} w_T(u)^{\top}\mathbf A_T^{-1}w_T(u) = \frac{\alpha^2}{\theta}\sigma^2(u).
        % \end{align}
    % Combining Lemma \ref{lemma3} and Lemma \ref{lemma4} gives
    % \begin{align}
        % \alpha hh^{\top} \succeq \operatorname{He}(\bar h\widehat h_0^{\top}) -(\alpha+\theta)\bar h\bar h^{\top} -2\kappa\mathbf Q_h -\frac{2\alpha^2}{\theta}\mathbf R_h. \label{eq62}
        % \end{align}
    %Now fix a performance level $\bar\rho_T$ \blue{for $T$  data samples} and a reference vector $\bar h_T$. Define
    % \begin{align}
        % X_T := &\mathcal C_T(\vartheta,\alpha,\bar\rho_T+\delta) +\operatorname{He}(\bar h\widehat h_0^{\top}) -(\alpha+\theta)\bar h\bar h^{\top} \notag\\
        % &-\alpha\mathbf R_h -2\kappa\mathbf Q_h.
        % \end{align}

        Fix  a performance level $\bar\rho_T$, a constant $\nu_T\ge0$, and define the reference vectors in \eqref{eq:localreference}. Define
        \begin{align}
            X_{T,j}:={}&
            \mathcal C_T(\vartheta_j,\alpha,\bar\rho_T+\delta)+\operatorname{He}(\bar h_j\widehat h_{0,j}^\top)\notag\\
            &-(\alpha+\theta_j)\bar h_j\bar h_j^\top-\alpha\mathbf R_h-\nu_T I.\label{eq:localX}
        \end{align}

    Now  define the following optimization problem: 
    % \begin{align}
        % \begin{aligned}
            %     \max_{\vartheta,\alpha,\widehat u,\delta,\theta,\kappa}
            %     \quad&\delta\\
            %     \text{s.t.}\quad
            %     &(\vartheta,\alpha,\bar\rho_T+\delta)\in\mathcal K,\\
            %     &(\alpha,\widehat u)\in\widehat{\mathcal U},\\
            %     &\theta>0,\\
            %     &\begin{bmatrix}
                %         \kappa & \widehat w^{\top} \\
                %         \widehat w & \theta\mathbf A_T
                %     \end{bmatrix}\succeq0,\\
            %     &\begin{bmatrix}
                %         X_T & \sqrt2\,\alpha L_T \\
                %         \sqrt2\,\alpha L_T^{\top} & \theta I_n
                %     \end{bmatrix}\succeq0\; .
            % \end{aligned}
        % \tag{SDP} \label{eq64}
        % \end{align}
    % Since $\widehat h_0$ is affine in $(\alpha,\widehat u)$ and $\mathcal C_T$ is affine in
    % $(\vartheta,\alpha,\rho)$, the problem \eqref{eq64} is a semi-definite program.

        \begin{align}
            \begin{aligned}
                \max_{\alpha,\widehat u,\delta,\{\vartheta_j,\theta_j\}_{j=1}^{M_T}}\quad&\delta\\
                \text{s.t.}\quad
                &(\alpha,\widehat u)\in\widehat{\mathcal U},\\
                &(\vartheta_j,\alpha,\bar\rho_T+\delta)\in\mathcal K,\quad j=1,\ldots,M_T,\\
                &\theta_j>0,\quad j=1,\ldots,M_T,\\
                &\delta\ge\varepsilon_T,\\
                &\begin{bmatrix}
                    X_{T,j}&\alpha B_j\\
                    \alpha B_j^\top&\theta_j I_n
                \end{bmatrix}\succeq0,
                \quad j=1,\ldots,M_T.
            \end{aligned}
            \tag{SDP}\label{eq64}
        \end{align}
        It is a semi-definite program: all matrix entries are affine in the decision variables and   $\alpha$ is a common multiplier.
    
    \begin{theorem}\label{theorem2}
        Suppose that \eqref{eq64} has a feasible solution
        $(\alpha_T,\widehat u_T,\delta_T, \{\vartheta_j,\theta_{T,j}\}_{j=1}^{M_T})$.
        If $\alpha_T>0$, applying $u_T=\widehat u_T/\alpha_T$, then for every admissible future observation the  optimal value $\rho_{T+1}^\star$ obtained by resolving the original SDP with the actual new data satisfies 
        \begin{align}
            {\rho_{T+1}^\star\ge\bar\rho_T+\delta_T.} \label{eq65}
        \end{align}
        Moreover, if  
            \begin{align*}
                \delta_T\ge \varepsilon_T,
            \end{align*}
            where $\varepsilon_T$ is given in \eqref{eq:barrhoT}, then 
            \begin{align}
                \rho_{T+1}^\star\ge \rho_T^\star. \label{strongimprovement}
        \end{align}
    \end{theorem}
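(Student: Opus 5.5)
The plan is to exhibit, for every admissible future observation $y=x_{T+1}$, an explicit feasible certificate for the problem defining $\rho_{T+1}^\star$ at level $\bar\rho_T+\delta_T$. Fix the feasible solution of \eqref{eq64} with $\alpha_T>0$ and set $u_T=\widehat u_T/\alpha_T$. Since $(\alpha_T,\widehat u_T)\in\widehat{\mathcal U}$, we have $u_T\in\mathcal U$, so $u_T$ is an admissible input. Let $y$ be any admissible next state. By Lemma~\ref{lemma2}, $d:=y-\bar y_T(u_T)=\mathbf Q_T^{1/2}v+e_T$ with $\|v\|_2\le\sigma_T(u_T)\le\bar\sigma_T$ and $e_Te_T^\top\preceq\Omega_T$, so $e_T=\Omega_T^{1/2}z$ for some $\|z\|_2\le1$ (e.g.\ $z=\Omega_T^{\dagger/2}e_T$, using that $e_T$ lies in the range of $\Omega_T$). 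Hence $d\in\mathcal R_T$, and by \eqref{eq:cover} there is an index $j$ with $d\in\mathcal R_{T,j}$. The candidate certificate at time $T+1$ is $(\vartheta_j,\alpha_T,\bar\rho_T+\delta_T)$. It lies in $\mathcal K$ by the constraints of \eqref{eq64}, and $\mathcal K$ is data-independent.

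It remains to show $\mathcal C_{T+1}(\vartheta_j,\alpha_T,\bar\rho_T+\delta_T)\succeq0$. By Lemma~\ref{lemma1},
\[
\mathcal C_{T+1}=\mathcal C_T(\vartheta_j,\alpha_T,\bar\rho_T+\delta_T)+\alpha_T hh^\top-\alpha_T R_T,
\]
with $h=h_T(u_T,y)$. Next, check that $R_T=OR_T^NO^\top=J\Omega_TJ^\top=\mathbf R_h$, which holds because $R_T^N=J_N\Omega_TJ_N^\top$. Then apply Lemma~\ref{lemma4} with $\alpha=\alpha_T$ and $\theta_j=\theta_{T,j}$; its hypothesis $d\in\mathcal R_{T,j}$ is exactly what was established above. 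This gives
\[
\mathcal C_{T+1}\succeq X_{T,j}+\nu_T I-\frac{\alpha_T^2}{\theta_{T,j}}B_jB_j^\top .
\]
Here $\widehat h_{0,j}$ evaluated at $(\alpha_T,\widehat u_T)$ equals $\alpha_Th_{0,j}(u_T)$ by \eqref{eq:scaledlocal}, which matches the term in Lemma~\ref{lemma4}. The last LMI of \eqref{eq64} with $\theta_{T,j}>0$ gives, by a Schur complement, $X_{T,j}-\frac{\alpha_T^2}{\theta_{T,j}}B_jB_j^\top\succeq0$. Since $\nu_T\ge0$, it follows that $\mathcal C_{T+1}\succeq0$.

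So the triple is feasible at time $T+1$, and by the definition \eqref{eq20} of $\rho^\star_{T+1}$ as a supremum we get \eqref{eq65}. For a strict-LMI variant, as in Example~\ref{ex:QS}, one uses $\nu_T>0$ to obtain strictness. For the second claim, combine \eqref{eq65} with $\delta_T\ge\varepsilon_T$ and \eqref{eq:barrhoT}:
\[
\rho_{T+1}^\star\ge\bar\rho_T+\varepsilon_T\ge\rho_T^\star,
\]
which is \eqref{strongimprovement}.

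The main obstacle is the step of verifying that every admissible $d$ falls into some cell $\mathcal R_{T,j}$. This needs a uniform $\bar\sigma_T$ and the representation of $e_T$ through $\Omega_T^{1/2}$, including the case where $\Omega_T$ is singular. The same step must also check that the choice of $\vartheta_j$ may depend on the unknown $y$. That dependence is harmless, because $\rho_{T+1}^\star$ is computed a posteriori from the actual data, so the certificate only needs to exist for the realised $y$. The remaining point to track carefully is the bookkeeping identity $R_T=\mathbf R_h$.
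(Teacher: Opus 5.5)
Your proposal is correct and follows essentially the same route as the paper's proof. You place the realised prediction error in some cover cell $\mathcal R_{T,j}$, take the Schur complement of the $j$-th LMI in \eqref{eq64}, combine Lemmas~\ref{lemma1} and~\ref{lemma4} to show that $(\vartheta_j,\alpha_T,\bar\rho_T+\delta_T)$ is feasible at time $T+1$, and chain the result with \eqref{eq:barrhoT}. The steps the paper leaves implicit and you make explicit ($u_T\in\mathcal U$, $e_T=\Omega_T^{1/2}z$ with $\|z\|_2\le1$ so that $d\in\mathcal R_T$, and $R_T=\mathbf R_h$) are all valid.
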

    
    % \red{{\bf PAOLO:} I think we can drop $(y)$ in $\rho^\star_T\red{(y)}$. The quantity $\rho^\star_T\red{(y)}$ is not defined; we have only defined $\rho^\star_T$. Sorry for the sloppy notation in my previous note. The red \red{$(\ldots)$} marks parts that can be deleted.}
    
    \begin{proof}
        % Take any admissible future observation. Applying the Schur complement to the LMI
        % \[
        % \begin{bmatrix}
            %         X_T & \sqrt2\,\alpha L_T \\
            %         \sqrt2\,\alpha L_T^{\top} & \theta I_n
            %     \end{bmatrix}\succeq0
        %     \]
        %     gives
        % \begin{align}
            %     X_T-\frac{2\alpha_T^2}{\theta_T}\mathbf R_h\succeq0.
            % \end{align}
        % By \eqref{eq62} and \eqref{eq33},
        % \begin{align}
            % \mathcal C_{T+1}(\vartheta_T,\alpha_T,\bar\rho_T+\delta_T)
            % =&\mathcal C_T(\vartheta_T,\alpha_T,\bar\rho_T+\delta_T)
            % +\alpha_T hh^{\top}\notag\\
            % &-\alpha_T\mathbf R_h\notag\\
            % \succeq& X_T-\frac{2\alpha_T^2}{\theta_T}\mathbf R_h
            % \succeq0.
            % \end{align}

            By Lemma~\ref{lemma2} and the definition \eqref{eq:cover} of the cover, the observed prediction error $d=y-\bar y_T(u)$ belongs to at least one region $\mathcal R_{T,j}$. For such region, the Schur complement of $\theta_j I_n\succ 0$ in the last constraint in \eqref{eq64} gives
            \begin{align*}
                X_{T,j}-\frac{\alpha_T^2}{\theta_{T,j}}B_jB_j^\top\succeq0.
            \end{align*}
            Recall the definition \eqref{eq:localX} of $X_{T,j}$, and apply Lemmas~\ref{lemma1} and~\ref{lemma4}; conclude that 
            \begin{align*}
                &\mathcal C_{T+1}(\vartheta_{T,j},\alpha_T,\bar\rho_T+\delta_T)\\
                &\quad\succeq X_{T,j}-\frac{\alpha_T^2}{\theta_{T,j}}B_jB_j^\top+\nu_T I\succeq\nu_T I\succeq 0.
            \end{align*}
            The same region's variables satisfy all constraints in $\mathcal K$. Hence $\bar\rho_T+\delta_T$ is feasible for the updated certificate problem, proving \eqref{eq65}. Combining this with $\delta_T\ge \varepsilon_T$ and  \eqref{eq:barrhoT} proves \eqref{strongimprovement}. 

        %The same pair $(\vartheta_T,\alpha_T)$ still satisfies all data-independent side constraints. Therefore, $\bar\rho_T+\delta_T$ is a feasible performance level for the updated original SDP, and optimality gives \eqref{eq65}. \blue{Combining the assumption $\delta_T\ge\varepsilon_T$ with \eqref{eq:barrhoT} yields \eqref{strongimprovement}.}
    \end{proof}
    \medskip
    \begin{remark}\rm The result of Theorem \ref{theorem2} depends on the choice of the constant $\nu_T$ in the definition of the regions $X_{T,j}$, see \eqref{eq:localX}. If $\nu_T>0$ then a strict matrix inequality results for $\mathcal C_{T+1}(\vartheta_{T,j},\alpha_T,\bar\rho_T+\delta_T)$, which may be useful for some applications.\hfill $\blacksquare$
    \end{remark}
    \medskip
    
    % The region, and therefore the certificate, is selected after observing $y$; the applied input is the same for every region.
    
    \section{Input-design procedure and applications}\label{sec:algo}
    We formulate an algorithm for input design based on  Theorem \ref{theorem1} and Theorem \ref{theorem2}, and we illustrate its application to the three data-driven control problems considered in Examples \ref{ex:QS}--\ref{ex:Diss}. \medskip

    % For the selected task, let $\rho_T^\star$ denote the value defined in the
    % corresponding example.  Choose a currently feasible baseline
    % $\bar\rho_T$.  If the supremum is not known to be attained, then, for any
    % prescribed $\varepsilon_T>0$, it can be chosen so that
    % \begin{align}
        %     \bar\rho_T\geq \rho_T^\star-\varepsilon_T.
        % \end{align}
    % If the optimum is attained, one may take
    % $\varepsilon_T=0$ and $\bar\rho_T=\rho_T^\star$.
    
    \subsection{Input-design procedure}
    
    We now combine \eqref{eq46} and \eqref{eq64} in an online input design procedure. The SOCP is used to construct the reference vector, and the SDP selects the next input subject to $\delta\ge\varepsilon_T$, ensuring that the optimal performance value does not decrease.\medskip
    
    \textbf{Input-design procedure}\medskip
    
    \begin{description}
        \item[{\bf Input:}]~~Initial data record $\mathcal D_{T_{\rm ini}}$, admissible input set $\mathcal U$, terminal time
            $T_{\max}>T_{\rm ini}$, and strictly positive tolerances
            $\{\varepsilon_t\}_{t=T_{\rm ini},\ldots,T_{\max}-1}$.
        \item[{\bf Output: }]~~ A feasible input sequence $\{u_t\}_{t=T_{ini},\ldots,T_{\max-1}}$ such that $\rho^\star_{T+1}\ge\rho^\star_T$. \medskip 
        
        \item[Step 1:]~~ $T:=T_{ini}$. \smallskip
        \item[Step 2:]~~ If $T\ge T_{\max}$, terminate. \smallskip
        \item[Step 3:]~~ Compute $\bar \rho_T$ for the prescribed $\varepsilon_T>0$; compute feasible reference solution $(\bar\vartheta_T,\bar\alpha_T,\bar\rho_T)$  satisfying \eqref{eq:barrhoT} and \eqref{eq40} with $\bar\alpha_T>0$ (see Remark \ref{rem:brhoT}).\smallskip
        \item[Step 4:]~~ Compute $N_T$ (from \eqref{eq13}); $\mathbf A_T$ (from \eqref{eq:ssellipsoid});  $\bm \zeta_T$ and $\mathbf Q_T$ (from \eqref{eq:zeta&QT}); $\mathbf Q_h$ and $\mathbf R_h$ (from \eqref{eq:Qh&Rh}).\smallskip
        \item[Step 5:]~~ Solve \eqref{eq46} for $s=1$ and $s=-1$. Choose an optimal solution attaining the larger optimal value, denote its input by $u_T^{\rm fd}$. 
            \item[Step 6:]~~For every region compute $\bar h_j=h_{0,j}(u_T^{\rm fd})$ and $B_j=JD_j$ (from \eqref{eq:localreference}). \smallskip
        \item[Step 7:]~~Solve \eqref{eq64}. If the problem is infeasible, then apply $u_T=u_T^{\rm fd}$ and go to Step 9.\smallskip
        \item[Step 8:]~~Apply $u_T=\widehat u_T/\alpha_T$ using the solution obtained in Step 7.\smallskip
        % \item[\textcolor{red}{Step 7:}]~~ \textcolor{red}{What if $\alpha_T=0$?} \smallskip
        \item[Step 9:]~~Measure $y=x_{T+1}$, update the data record
            $\mathcal D_{T+1}$. update $N_{T+1}$ using \eqref{eq34}.\smallskip
        %\item[\red{Step 10:}]~~\red{Compute the value of $\bar \rho_{T+1}$ for $\mathcal C_{T+1}(\vartheta,\alpha,\rho)$.}\smallskip
        \item[Step 10:]~~$T:=T+1$; go to Step 2. 
    \end{description}
    \bigskip
    
    In Step 3 we compute the performance at time $T$; this step can result in the optimal performance ($\rho_T=\rho_T^\star$) or a strictly feasible baseline value $\bar\rho_T$ for it. In Steps 4 and 5, we  compute a weakest direction as a reference vector. In Step 6, we compute the cover of the error space; in Step 7, we solve the one-step perspective problem. If such problem is feasible  with value $\delta_T$, then for every admissible future observation $y$ corresponding to the input computed in Step 8 it holds that 
    \begin{align*}
        \rho_{T+1}^\star(y) \geq \bar\rho_T+\delta_T \geq \rho_T^\star+\delta_T-\varepsilon_T\; .
    \end{align*}
    Note that $\delta_T\geq\varepsilon_T$ ensures that the value of the 
    $T_{\max}>T_{\rm ini}$, and strictly positive tolerances
    optimal performance does not decrease; while $\delta_T>\varepsilon_T$ guarantees a strict increase of performance.  We continue to apply the steps until we reach the prescribed number of data samples $T_{max}$. \medskip 
    
    % \textcolor{red}{{\bf YISHU:} I have  left out the formal proof of the correctness. It seemed excessive to me. A question- what is the termination condition? We need to have one, of course. Maybe your  work on the code and the examples will suggest one (maybe "if performance does not improve more than...stop"?}
    
    % \textcolor{blue}{\textbf{PAOLO:} Your revisions to the proof are excellent! I have added termination conditions to the algorithm, with the text to be deleted marked in red. I have also made a few small changes to the algorithm (apologies for the inaccuracies in my earlier note). In fact, by the definition in \eqref{eq:hatU}, the constraint $(\alpha,\widehat u)\in\widehat{\mathcal U}$ ensures that every feasible solution satisfies $\alpha_T>0$, so the case $\alpha_T=0$ does not arise. }  
    
    \subsection{Applications to three control problems}
    We apply the procedure to the three control problems introduced in Examples~\ref{ex:QS}--\ref{ex:Diss}.
    
    \subsubsection{Quadratic stabilization}\label{sec:algoQS}
    %\begin{example}[{Quadratic stabilization}]\rm\label{sec:algoQS}
    We denote by $\rho_T^\star$ the normalized stabilization-certificate
    margin defined in Example~\ref{ex:QS}.  Substituting
    $(\vartheta,\mathcal F,O,\mathcal K)=( (P,L),\mathcal F_{\rm stab},O_{\rm stab},\mathcal K_{\rm stab})$
    into the unified design gives
    \begin{align}
        \rho_{T+1}^\star \geq \bar\rho_T+\delta_T\ge \rho_T^\star.
    \end{align}
    If there exists $T_1$ such that 
    \begin{align}
        \bar\rho_{T_1}+\delta_{T_1}>0,
    \end{align}
    then the SDP certificate itself is a strict common quadratic stabilization certificate for all systems consistent with the updated data.  In particular, the corresponding $P$ is positive definite and the feedback gain is recovered as
    \begin{align}
        K=LP^{-1}.
    \end{align}
    %\end{example}
    %\medskip 
    
    \subsubsection{$\mathcal H_2$ control}\label{sec:algoH2}
    %\begin{example}[$\mathcal H_2$ control]
    In Example~\ref{ex:H2}, $\rho=-\gamma_2^2$.  Define the common
    $\mathcal H_2$ bound certified by the data as
    \begin{align}
        \gamma_{2,T}:=\sqrt{-\rho_T^\star}.
    \end{align}
    It follows from \eqref{eq65} and \eqref{eq:barrhoT} that 
    \begin{align}
        \bigl(\gamma_{2,T+1}\bigr)^2&\le -\bar\rho_T-\delta_T \nonumber\\
        &\le \bigl(\gamma_{2,T}\bigr)^2 +\varepsilon_T-\delta_T.
    \end{align}
    Thus, $\gamma_{2,T}$ does not increase if $\delta_T\ge\varepsilon_T$, and strictly decreases if $\delta_T>\varepsilon_T$. Since $\bar\rho_T+\delta_T\le0$, every performance level
    \begin{align}
        \gamma_2>\sqrt{-\bar\rho_T-\delta_T}
    \end{align}
    is certifiable for all systems consistent with the updated data.
    %\end{example}
    %\medskip 

    \subsubsection{Strict dissipativity}\label{sec:algoSD}
    %\begin{example}[{Strict dissipativity}]\rm\label{sec:algoSD}
    In Example~\ref{ex:Diss}, $\rho_T^\star$ is the uniform strictness margin of the dualized dissipativity certificate associated with the fixed supply matrix $S$. If
    \begin{align}
        \bar\rho_T+\delta_T>0,
    \end{align}
    then the solution of \eqref{eq64} provides a common quadratic storage
    function establishing strict dissipativity for all systems consistent
    with the updated data, with respect to the same fixed supply rate.
    %\hfill $\blacksquare$
    %\end{example}
    \medskip 
    
    In Appendix \ref{app:3morappl} we illustrate the application of the procedure to the additive and multiplicative feedback fragility, and the $\mathcal{H}_\infty$-control problems. 
    
    \section{Numerical examples}\label{sec:numerical}
    
    In Section \ref{sec:onlVSoffl} we  examine online input design for three control objectives. In Section \ref{sec:offlVSoffl} compare the number of samples required by \eqref{eq46} and \eqref{eq64} to attain a prescribed stabilization margin. Each comparison uses 100 independent trials, with the same initial data and disturbance sequence for all methods within each trial. Target attainment refers to the first update at which the prescribed level is reached. The accompanying MATLAB implementation uses YALMIP and MOSEK, with $\varepsilon_T=10^{-5}$.
    
    \subsection{Online vs. off-line input design for a two-state system}\label{sec:two-state system}\label{sec:onlVSoffl}
    In this section we show that the online procedures are more efficient than the off-line procedure both in terms of data samples and of achieved performance $\rho^\star$. 
    
    Consider the system in  Section IV of \cite{CoulsonQuantitativePE}
    \begin{align*}
        A_\star=\begin{bmatrix}1.5&1\\0&1\end{bmatrix},\qquad
        B_\star=\begin{bmatrix}0\\1\end{bmatrix}.
    \end{align*}
    For the quadratic stabilization problem (see Example \ref{ex:QS}), we choose $\mathcal{U}$ described by the constraint $|u_t|\leq0.21$; and the noise model with parameters $\Omega=0.02I_2$ (Assumption \ref{ass:indivbound})  and $\Delta_T=T\Omega$. The initial data are
    \begin{align*}
        W_3&=\begin{bmatrix}
            3.310&1.282&-1.803\\
            -3.632&-3.733&-3.480\\
            -0.168&0.168&-0.168
        \end{bmatrix}.
    \end{align*}
    The final state of the initial record is $x_3=[-6.249,\,-3.703]^\top$. 
    Future disturbances are uniform on the circle of radius $\sqrt{0.02}$. Note that although $W_3$ has full row rank, the initial margin $\rho_3^\star=-9.86\times10^{-4}$ is negative. Hence no controller can quadratically stabilize all systems consistent with the current data.
    
    First, we collect 70 additional samples using inputs drawn uniformly from $[-0.21,0.21]$. Figure~\ref{fig:random} shows the mean performance and the variation across trials. None of the 100 final records provides a positive stabilization certificate.
    
    \begin{figure}[!htbp]
        \centering
        \includegraphics[width=\columnwidth]{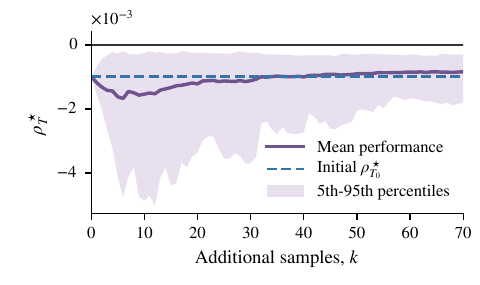}
        \caption{Random inputs over 70 updates. The solid curve is the mean
            over 100 trials; shading gives the 5th--95th percentiles. }
        \label{fig:random}
    \end{figure}
    
    Then, starting from the same data and using a cover with $M_T=856$ subregions, the online policies obtain a positive margin within ten additional samples in 76 trials, with a median of 7 samples. 
    A random input does so in only 4 trials within ten samples. The SDP uses $M_T$ covering regions and applies the SOCP input at every update in this experiment. Figure~\ref{fig:stab} shows that under both online policies the mean margin becomes positive after 9 updates.
    
    \begin{figure}[!htbp]
        \centering
        \includegraphics[width=\columnwidth]{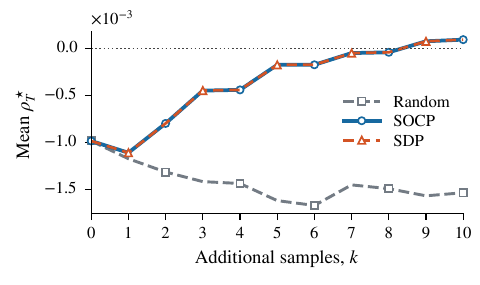}
        \caption{Quadratic stabilization: mean performance over 100 trials.}
        \label{fig:stab}
    \end{figure}
    
    For the $\mathcal H_2$ control problem (see Example \ref{ex:H2}), we retain $A_\star,B_\star$ as above and we set $C=\begin{bmatrix}1&0\end{bmatrix}$, $D=0$. The admissible inputs are constrained by $|u_t|\leq0.4$.
    We choose $\Delta_5=0.01I_2$, $\Omega=0.0008I_2$, and
    \begin{center}
        \resizebox{\columnwidth}{!}{$\displaystyle
            W_5=\begin{bmatrix}
                -0.070&-0.081&0.195&0.699&1.276\\
                0.030&0.312&0.419&0.234&-0.024\\
                0.270&0.120&-0.190&-0.270&0.270
            \end{bmatrix}.$}
    \end{center}
    The final state of the initial record is $x_5=[1.903,\,0.243]^\top$, and the initial data certify $\gamma_{2,5}^2=-\rho_5^\star=14.28$. Future disturbances are uniform in the disk of radius $\sqrt{0.0008}$, and $\Delta_{5+k}=\Delta_5+k\Omega$. $M_T = 224$. 
    
    Both online policies attain the target within ten samples in all 100 trials.
    %The SDP uses 224 covering regions. 
    Figure~\ref{fig:h2} shows that both online policies reduce the squared performance bound to $7$, a reduction of approximately 51\%, within ten samples in 98 trials. The median is 8 samples. Random input attains the same bound in only 3 trials. At $k=10$, the proportions are 98\% for both online policies and 3\% for random input.
    
    \begin{figure}[!htbp]
        \centering
        \includegraphics[width=\columnwidth]{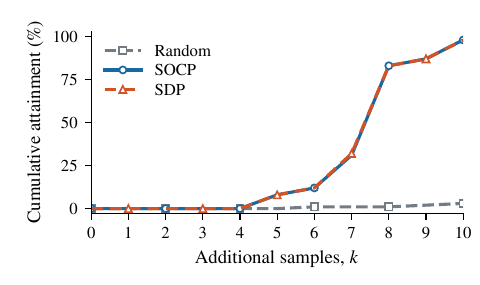}
        \caption{$\mathcal H_2$ control: proportion of trials attaining $\gamma_2^2\leq7$ within $k$ additional samples.}
        \label{fig:h2}
    \end{figure}
    
    For the  strict dissipativity problem (see Example \ref{ex:Diss}), we choose 
    \begin{align*}
        C=\begin{bmatrix}1.5&2\end{bmatrix},\qquad D=1,\qquad
        S=\operatorname{diag}(-0.025,1).
    \end{align*}
    The admissible inputs are constrained by $|u_t|\leq1$; the noise  has parameters $\Delta_3=0.1I_2$, $\Omega=0.002I_2$, and
    \begin{align*}
        W_3=\begin{bmatrix}
            -0.080&0.032&-0.006\\
            0.150&-0.051&-0.249\\
            -0.200&-0.200&-0.200
        \end{bmatrix},\qquad
        x_3=\begin{bmatrix}-0.257\\-0.446\end{bmatrix}.
    \end{align*}
    The disturbances have norm $0.8\sqrt{0.002}$ and uniformly distributed directions. We use $\Delta_{3+k}=\Delta_3+k\Omega$.
    The initial margin is $\rho_3^\star=-0.513$. The number of sub-regions for the cover is $M_T= 224$.  
    As shown in Figure~\ref{fig:diss}, both online policies attain $\rho_{\mathrm{tar}}=0.1$ in \emph{all} 100 trials within ten samples, with a median of 8 samples. Random input does not attain such level in \emph{any} trial.
    
    \begin{figure}[!htbp]
        \centering
        \includegraphics[width=\columnwidth]{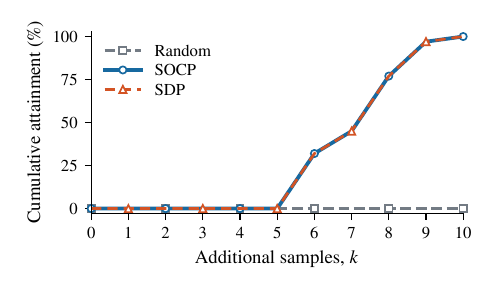}
        \caption{Strict dissipativity: proportion of trials attaining $\rho_{\mathrm{tar}}=0.1$ within $k$ additional samples.}
        \label{fig:diss}
    \end{figure}
    
    \subsection{Sample requirements of the SDP and SOCP policies}\label{sec:offlVSoffl}
    In the previous section we compared 
    the online policies SDP and SOCP with the off-line procedure. From the examples it was not apparent whether the additional computations involved in SDP were worth the improvement in performance, see Figures \ref{fig:stab}, \ref{fig:h2}, \ref{fig:diss}. We now compare the two online policies with each other on data generated by a three-state, two-input system described by
    \begin{align*}
        A_\star&=\begin{bmatrix}
            0.49023&-0.25557&-0.02215\\
            -0.13020&0.78215&0.20542\\
            0.05979&0.25045&0.78101
        \end{bmatrix},\\
        B_\star&=\begin{bmatrix}
            0.42538&-0.68301\\
            3.38485&1.76716\\
            0.08851&-1.72495
        \end{bmatrix},\qquad \|u_t\|_2\leq1.8.
    \end{align*}
    \begin{samepage}
        The noise has parameters $\Delta_5=0.01I_3$, $\Omega=0.0005I_3$, and $\Delta_{5+k}=\Delta_5+k\Omega$. The initial data is
        \begin{center}
            \resizebox{\columnwidth}{!}{$\displaystyle W_5=\begin{bmatrix}
                    0.650170&0.426918&1.401416&0.367489&0.153305\\
                    0.057460&-3.288858&-2.157021&-3.130513&-3.053396\\
                    -0.538980&0.537281&0.255565&-1.662076&-3.703085\\
                    -0.636170&0.290340&-0.792240&-0.546200&-0.572940\\
                    -0.557160&-0.352250&0.773830&0.924710&-0.995350
                \end{bmatrix},$}
        \end{center}
    \end{samepage}
    with $x_5=[1.372851,\,-6.867020,\,-1.982185]^\top$. Disturbances are uniform in the ball of radius $\sqrt{0.0005}$,
    and random inputs are uniform in the admissible input ball. The initial margin is $\rho_5^\star=-1.44\times10^{-4}$. The number of subregions is $M_T=208$. %, and the SDP uses 208 covering regions.
    
    We fix $\rho_{\mathrm{tar}}=0.0300$ after separate preliminary trials and perform 100 new trials with a budget of ten additional
    inputs. For each method, we record
    \begin{align*}
        \tau(r)=\inf\{k\geq0:\rho_{5+k}^\star\geq r\}.
    \end{align*}
    Table~\ref{tab:samples} reports the results for this target, together with the lower level $r=0.02975$ evaluated from the same trajectories. At the lower level, both online policies succeed in all 100 trials, but the SDP has a lower  median number of samples,  4 against 7. At $r=0.0300$, the SDP reaches the target in 99 trials within five samples and in all 100 trials within 10
    samples. The SOCP does not attain such level within ten samples. Compared with random input, the SDP reduces the median from
    9 to 5 samples, a reduction of 44.4\%. 
    
    \begin{table}[!htbp]
        \centering
        \caption{Additional samples required to attain a prescribed stabilization margin.
            Medians use all 100 trials.}
        \label{tab:samples}
        \small
        \setlength{\tabcolsep}{4.3pt}
        \begin{tabular}{llccc}
            \toprule
            & & & \multicolumn{2}{c}{Trials attaining the target}\\
            \cmidrule(lr){4-5}
            Target $r$ & Method & Median & By $k=5$ & By $k=10$\\
            \midrule
            0.02975 & Random & 8 & 7/100 & 81/100\\
            & SOCP & 7 & 24/100 & 100/100\\
            & \textbf{SDP} & \textbf{4} & \textbf{100/100} & \textbf{100/100}\\
            \midrule
            0.0300 & Random & 9 & 5/100 & 76/100\\
            & SOCP & 130 & 0/100 & 0/100\\
            & \textbf{SDP} & \textbf{5} & \textbf{99/100} & \textbf{100/100}\\
            \bottomrule
        \end{tabular}
    \end{table}

    \section{Conclusions}
    We set up  a unifying framework for online input design for data-driven control of systems affected by noise, based on a formulation of a typical SDP arising in the informativity framework with QMI-representable noise constraints. We  defined a "quality measure" of the data associated with the variables characterizing such  SDP; we related its value on the data over $[0,T]$ to that over $[0,T+1]$; and we provided an iterative procedure to design the input at time $T$ so that the quality of the corresponding data at $T+1$ is not decreased. We showed the usefulness of our approach in dealing with six typical noisy data-driven control problems in the informativity framework, presenting numerical examples in which the procedure performs much better than one based on random inputs. 
    
    % Current research work aims at extending our framework to nonlinear systems, using reproducing kernel Hilbert spaces representations. 
    
    %\appendix
    \appendices
    
    \section{Three additional examples}\label{app:3moreex}

    \begin{example}[Additive feedback fragility]\label{ex:AFF}
        Assume that the data $\mathcal D_T$ are informative for quadratic stabilization and $\mathbf Q_T\succ0$. For $P\succ0$ and $\alpha\geq0$, denote by $\mathfrak K_T(P,\alpha)$  the set of stabilizing feedback gains for $\mathcal D_T$ with the same pair $(P,\alpha)$.
        Define
        \begin{align}
            \lambda_{\mathcal D_T}^{\rm a} := \sup\left\{ r\geq0: \exists\,P\succ0,\ \alpha\geq0,\ K \text{ such that } \right. \notag \\ 
            \left. K+\Delta_K\in\mathfrak K_T(P,\alpha), \forall\,\Delta_K\in\mathbb{R}^{m\times n}\text{ with }\|\Delta_K\|\leq r \right\}.
        \end{align}
        The quantity $\lambda_{\mathcal D_T}^{\rm a}$ is the supremal
        additive gain-perturbation radius that can be certified, through
        a common quadratic data-driven certificate, uniformly for all
        systems in $\Sigma_{T}$.
        
        To preserve affine dependence on the performance variable,
        we optimize the squared radius. Take $\vartheta=(Q,L)$ and define

        \begin{align}
            \mathcal F(Q,L,\rho) &:= \begin{bmatrix}
                Q & 0 & 0 & 0 & 0 \\
                0 & -Q & -L^{\top} & -Q & 0 \\
                0 & -L & -\rho I_m & 0 & L \\
                0 & -Q & 0 & I_n & Q \\
                0 & 0 & L^{\top} & Q & Q
            \end{bmatrix}, \\
            O &:= \begin{bmatrix}
                I_{2n+m} \\
                0_{2n\times(2n+m)}
            \end{bmatrix}.
        \end{align}
        Let
        \begin{align}
            \mathcal K_{\rm add} := \left\{ (Q,L,\alpha,\rho): Q=Q^{\top}\succeq0,\ \alpha\geq0,\ \rho\geq0 \right\}.
        \end{align}
        The corresponding optimal performance index is
        \begin{align}
            \rho_T^\star=\sup\left\{ \rho:\exists\ (Q,L,\alpha,\rho)\in\mathcal K_{\rm add}  \text{ such that } \right. \notag\\
            \left. \mathcal F(Q,L,\rho)-\alpha ON_TO^{\top}\succeq0,\right\}.
        \end{align}
        %According to literature, $\rho_T^\star=\bigl(\lambda_{\mathcal D_T}^{\rm a}\bigr)^2$. The result ensures that the supremal attainment is  available. Thus, the value $\sqrt{\rho_T^\star}$ is the largest certified radius for which one can choose a gain whose common quadratic stability guarantee is preserved for all $\|\Delta_K\|<\sqrt{\rho_T^\star}$ and all systems in $\Sigma_T$. 
        Under the stated assumptions, this SDP attains its optimal value
        and $\rho_T^\star=(\lambda_{\mathcal D_T}^{\rm a})^2$ (see Theorem~14 in \cite{li2025fragilityanalysisdatadrivenfeedback}). An optimizer determines the gain
        \begin{align}
            K^\star=L^\star(Q^\star)^\dagger,
        \end{align}
        for which $A+B(K^\star+\Delta_K)$ is Schur for every $(A,B)\in\Sigma_T$ and every $\|\Delta_K\|_2<\sqrt{\rho_T^\star}$. 
        \hfill $\blacksquare $
    \end{example} 
    \medskip
    \begin{example}[$\mathcal H_\infty$ control]\label{ex:Hinf}\rm 
        With the same performance output and $C_{Y,L}$, set $\rho=\gamma_\infty^{-2}$, take $\vartheta=(Y,L)$, and define
        \begingroup
        \small
        \setlength{\arraycolsep}{4pt}
        \begin{align}
            \mathcal F(Y,L,\rho) &:= \begin{bmatrix}
                Y & 0 & 0 & 0 & C_{Y,L}^{\top} \\
                0 & 0 & 0 & Y & 0 \\
                0 & 0 & 0 & L & 0 \\
                0 & Y & L^{\top} & Y-\rho I_n & 0 \\
                C_{Y,L} & 0 & 0 & 0 & I_\ell
            \end{bmatrix}, \\
            O &:= \begin{bmatrix}
                I_{2n+m} \\
                0_{(n+\ell)\times(2n+m)}
            \end{bmatrix}.
        \end{align}
        \endgroup
        Let
        \begin{align}
            \mathcal K_{\mathcal H_\infty} := \left\{ (Y,L,\alpha,\rho): Y=Y^{\top},\ \alpha\geq0,\ \rho\geq0 \right\}.
        \end{align}
        The corresponding optimal certifiable task level is
        \begin{align}
            \rho_T^\star = \sup\left\{ \rho: \exists\,(Y,L,\alpha,\rho)\in\mathcal K_{\mathcal H_\infty},\right.\notag\\ \left.\mathcal F(Y,L,\rho)-\alpha ON_TO^{\top}\succ 0\right\}.
        \end{align}
        A feasible solution with $\rho>0$ gives the gain $K=LY^{-1}$, which stabilizes all systems consistent with the data and
        guarantees a common $\mathcal H_\infty$ bound
        $\gamma_\infty=1/\sqrt{\rho}$. Hence, when $\rho_T^\star>0$, the infimum of these certified bounds is $1/\sqrt{\rho_T^\star}$. \hfill $\blacksquare $
    \end{example}
    \medskip 
    
    \begin{example}[Multiplicative feedback fragility]\label{ex:MFF}  
        The perturbed closed-loop matrix is $A+B\Delta_KK$, where $\Delta_K\in\mathbb R^{m\times m}$ satisfies $\|\Delta_K-I_m\|_2\le r$. Set $\rho=r^2$, take $\vartheta=(Q,L)$, and define 
        \begingroup
        \small
        \setlength{\arraycolsep}{4pt}
        \begin{align}
            \mathcal F(Q,L,\rho) &:= \begin{bmatrix}
                Q & 0 & 0 & 0 & 0 \\
                0 & -Q & 0 & -L^{\top} & 0 \\
                0 & 0 & (1-\rho)I_m & -I_m & 0 \\
                0 & -L & -I_m & I_m & L \\
                0 & 0 & 0 & L^{\top} & Q
            \end{bmatrix}, \\
            O &:= \begin{bmatrix}
                I_{2n+m} \\
                0_{(n+m)\times(2n+m)}
            \end{bmatrix}.
        \end{align}
        \endgroup
        Let
        \begin{align}
            \mathcal K_{\rm mult} := \left\{ (Q,L,\alpha,\rho): Q=Q^{\top}\succ0,\ \alpha\geq0,\ \rho\geq0 \right\}.
        \end{align}
        The corresponding performance index is
        \begin{align}
            \rho_T^\star = \sup\left\{ \rho: \exists\,(Q,L,\alpha,\rho)\in\mathcal K_{\rm mult},\right.\notag\\
            \left.\mathcal F(Q,L,\rho)-\alpha ON_TO^{\top}\succeq0\right\}.
        \end{align}
        For each feasible solution, set $K=LQ^{-1}$. The strict matrix inequality ensures that 
        \begin{align*}
            Q-(A+B\Delta_KK)Q(A+B\Delta_KK)^{\top}\succ0
        \end{align*}
        for all $(A,B)\in\Sigma_T$ and all $\|\Delta_K-I_m\|_2\le\sqrt{\rho}$.
        Thus, the same quadratic Lyapunov function $V(x)=x^{\top}Q^{-1}x$ certifies stability throughout this perturbation ball, and $\sqrt{\rho_T^\star}$ is the supremum of the radii certified by this condition. \hfill $\blacksquare$
    \end{example}
    
    \section{Three additional applications}\label{app:3morappl}
    
    \subsubsection{Additive feedback fragility}\label{sec:algoAFF} 
    %\begin{example}[{Additive feedback fragility}]\label{sec:algoAFF} 
    Suppose that the assumptions of Example~\ref{ex:AFF} hold for the current and updated data. Write $\lambda_T^{\rm a}:=\lambda_{\mathcal D_T}^{\rm a}$, so that $\rho_T^\star=(\lambda_T^{\rm a})^2$. Combining \eqref{eq65} and \eqref{eq:barrhoT} gives
    \begin{align*}
        \bigl(\lambda_{T+1}^{\rm a}\bigr)^2 \ge \bar\rho_T+\delta_T 
        \ge \bigl(\lambda_T^{\rm a}\bigr)^2 +\delta_T-\varepsilon_T.
    \end{align*}
    Thus, $\delta_T-\varepsilon_T$ is a lower bound on the increase in the squared certified radius. In particular, $\delta_T>\varepsilon_T$ guarantees a larger supremal certified additive perturbation radius.
    %\hfill $\blacksquare$
    %\end{example}
    %\medskip 
    
    \subsubsection{$\mathcal H_\infty$ control} 
    %\begin{example}[$\mathcal H_\infty$ control]
    The choice $\rho=\gamma_\infty^{-2}$ in Example~\ref{ex:Hinf} relates an increase in $\rho$ to a decrease in the certified
    $\mathcal H_\infty$ bound. For $\rho_T^\star>0$, set
    \begin{align*}
        \gamma_{\infty,T}:=\frac{1}{\sqrt{\rho_T^\star}},
    \end{align*}
    which is the infimum of the common bounds certified from the data.
    When $\delta_T\ge\varepsilon_T$, \eqref{eq65} and
    \eqref{eq:barrhoT} imply
    \begin{align*}
        \gamma_{\infty,T+1} \le \frac{1}{\sqrt{\bar\rho_T+\delta_T}}
        \le \frac{1}{\sqrt{\rho_T^\star+\delta_T-\varepsilon_T}} \le \gamma_{\infty,T}.
    \end{align*}
    The last comparison is strict if $\delta_T>\varepsilon_T$. Moreover, the corresponding matrices $Y$ and $L$ give the
    feedback gain $K=LY^{-1}$, which certifies the common bound $1/\sqrt{\bar\rho_T+\delta_T}$ for all systems consistent
    with the updated data.
    %\hfill $\blacksquare$
    %\end{example}
    %\medskip 
    
    \subsubsection{Multiplicative feedback fragility}\label{sec:algoMFF}
    %\begin{example}[Multiplicative feedback fragility]
    For Example~\ref{ex:MFF}, let $Q$ and $L$ be the matrices in the updated certificate at level $\bar\rho_T+\delta_T$. Recover the feedback gain and define
    \begin{align*}
        K=LQ^{-1}, \qquad V(x)=x^{\top}Q^{-1}x.
    \end{align*}
    The strict certificate ensures that
    \begin{align*}
        Q-(A+B\Delta_KK)Q(A+B\Delta_KK)^{\top}\succ0
    \end{align*}
    for every system consistent with the updated data and every multiplicative perturbation satisfying
    \begin{align*}
        \|\Delta_K-I_m\|_2 \le \sqrt{\bar\rho_T+\delta_T}.
    \end{align*}
    Consequently, $V$ is a common quadratic Lyapunov function for the perturbed closed-loop systems throughout this ball.
    %\hfill $\blacksquare$
    %\end{example}
    
    \section{Construction of a finite cover}\label{app:cover}
    Following the grid construction in the proof of Lemma~5 in \cite{Dumer2004}, we construct the regions in \eqref{eq:cover}
    by enclosing finitely many boxes in Euclidean balls. Fix $r_T>0$ and define, for $i=1,\ldots,n$,
    \begin{align*}
        b_i&:=\bar\sigma_T\sqrt{(\mathbf Q_T)_{ii}} +\sqrt{(\Omega_T)_{ii}},\\
        k_i&:=\max\left\{1,\left\lceil\frac{\sqrt n\,b_i}{r_T}\right\rceil\right\},\qquad M_T:=\prod_{i=1}^n k_i.
    \end{align*}
    The Cauchy--Schwarz inequality gives $\mathcal R_T\subseteq\prod_{i=1}^n[-b_i,b_i]$. Divide each interval $[-b_i,b_i]$ into $k_i$ equal subintervals and denote the resulting boxes by $\mathcal B_j=[\ell_j,u_j]$, $j=1,\ldots,M_T$. For each box, set
    \begin{align*}
        c_j:=\frac{\ell_j+u_j}{2},\qquad 
        a_j:=\frac{u_j-\ell_j}{2},\qquad
        D_j:=r_T I_n.
    \end{align*}
    For every $d\in\mathcal B_j$,
    \begin{align*}
        \|d-c_j\|_2^2 \le\|a_j\|_2^2
        =\sum_{i=1}^n\left(\frac{b_i}{k_i}\right)^2
        \le r_T^2.
    \end{align*}
    Hence $D_j\succ0$ and
    \begin{align*}
        \mathcal R_T\subseteq\bigcup_{j=1}^{M_T}\mathcal B_j \subseteq\bigcup_{j=1}^{M_T}\mathcal R_{T,j},
    \end{align*}
    which proves \eqref{eq:cover}. If $b_i=0$, take the corresponding subinterval to be $\{0\}$; the same argument applies.
    The same conversion applies to any finite box cover with side lengths at most $2r_T/\sqrt n$, with $M_T$ equal to the number of boxes.
    \bigskip

    \bibliographystyle{IEEEtran}
    \bibliography{References}
    
    \begin{IEEEbiography}[{\includegraphics[width=1in,height=1.25in,clip,keepaspectratio]{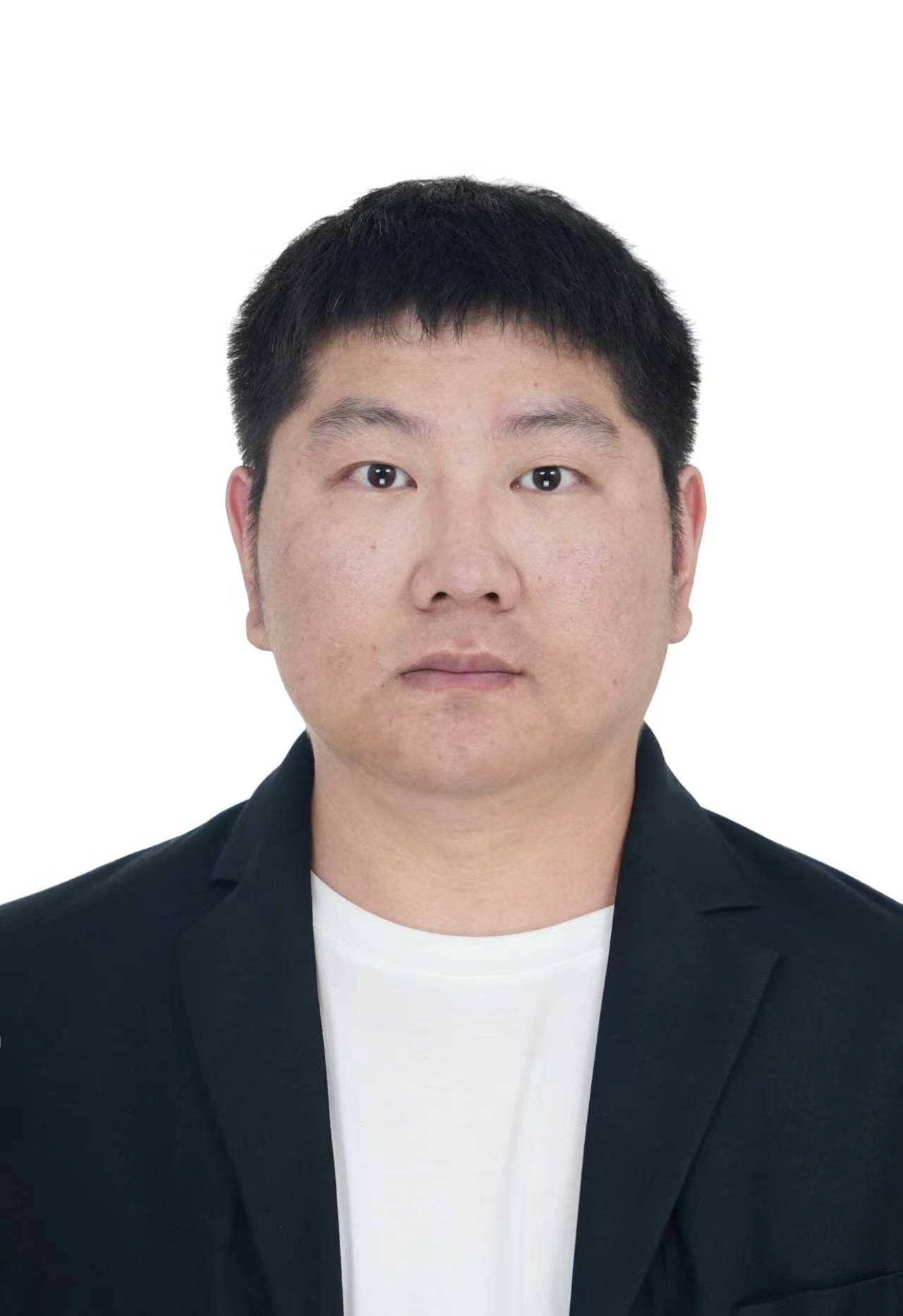}}]
        {\textbf{Yishu Wang}} (Member, IEEE) received the B.S. degree in mathematics from Xiangtan University, Xiangtan, China, in 2014, the M.S. degree and the Ph.D. degree in mathematics from Academy of Mathematics and Systems Science, Chinese Academy of Sciences, Beijing, China, in 2016 and 2019, respectively. He is now a postdoctoral researcher at Southeast University. His research interests include data-driven control, resilient control, nonlinear control, and multi-agent systems.
    \end{IEEEbiography}
    
    \begin{IEEEbiography}[{\includegraphics[width=1in,height=1.25in,clip,keepaspectratio]{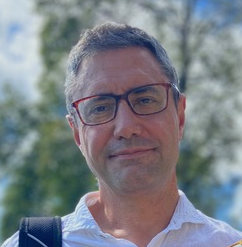}}]
        {\textbf{P. Rapisarda}}received his Ph.D. degree at the Department of Mathematics, University of Groningen, The Netherlands. He has held lecturing position at the University of Trieste, Italy; and at the University of Maastricht, The Netherlands. He is currently Professor in Control Theory at the School of Electronics and Computer Science of the University of Southampton, United Kingdom. His research interests include data-driven simulation and control, system identification, multidimensional systems, model reduction. He was {associate editor} of the {IEEE Transactions on Automatic Control} (2017-2022), of  {Systems and Control Letters}  (2004-16), and of {Multidimensional Systems and Signal Processing} (2012-2024). He is associate editor of the {IMA Journal of Mathematical Control and Information} since 2016 and of the IEEE CSS-Letters since 2022. 
    \end{IEEEbiography}
    
    \begin{IEEEbiography}[{\includegraphics[width=1in,height=1.25in,clip,keepaspectratio]{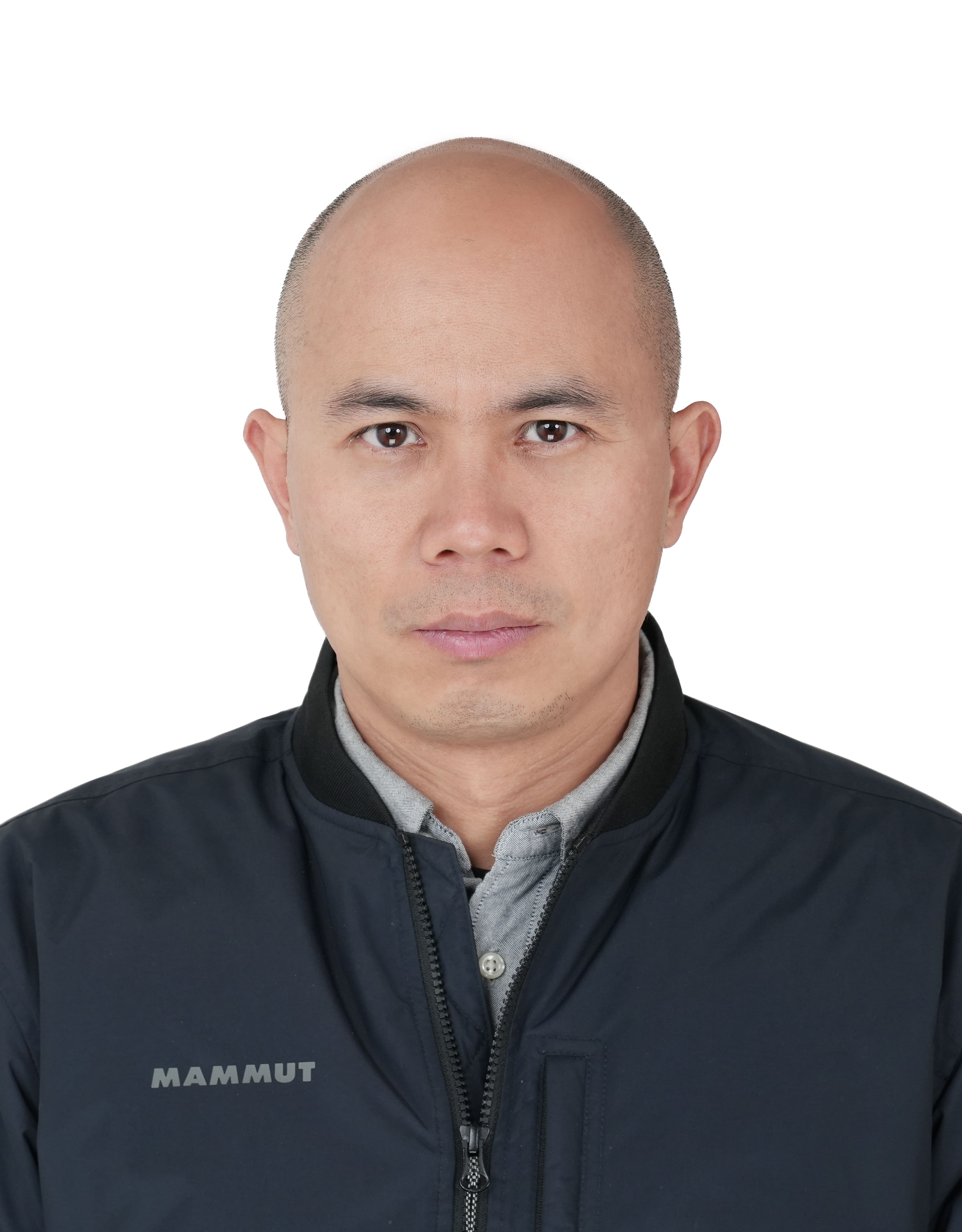}}]    	
        {\textbf{Jianquan Lu}} (Senior Member, IEEE) received the B.S. degree in mathematics from Zhejiang Normal University, Zhejiang, China, in 2003, the M.S. degree in mathematics from Southeast University, Nanjing, China, in 2006, and the Ph.D. degree in applied mathematics from City University of Hong Kong, Hong Kong, in 2009. From 2010 to 2012, he was an Alexander von Humboldt Research Fellow in PIK, Germany. He is currently a full professor with Southeast University, Nanjing, China. His current research interests include collective behavior in complex dynamical networks and multi-agent systems, logical networks, and hybrid systems. He has published over 100 IEEE Transactions and SIAM journal papers.
        
        Dr. Lu was named a Highly Cited Researcher by Clarivate Analytics, and he was elected Most Cited Chinese Researchers by Elsevier. Dr. Lu is an associate editor of \textit{Neural Processing Letters}, \textit{Journal of Franklin Institute}, \textit{Neural Computing and Applications}, and a guest editor of \textit{Science China: Information Sciences}, \textit{IET Control Theory \& Applications}.
    \end{IEEEbiography}

    \begin{IEEEbiography}[{\includegraphics[width=1in,height=1.25in,clip,keepaspectratio]{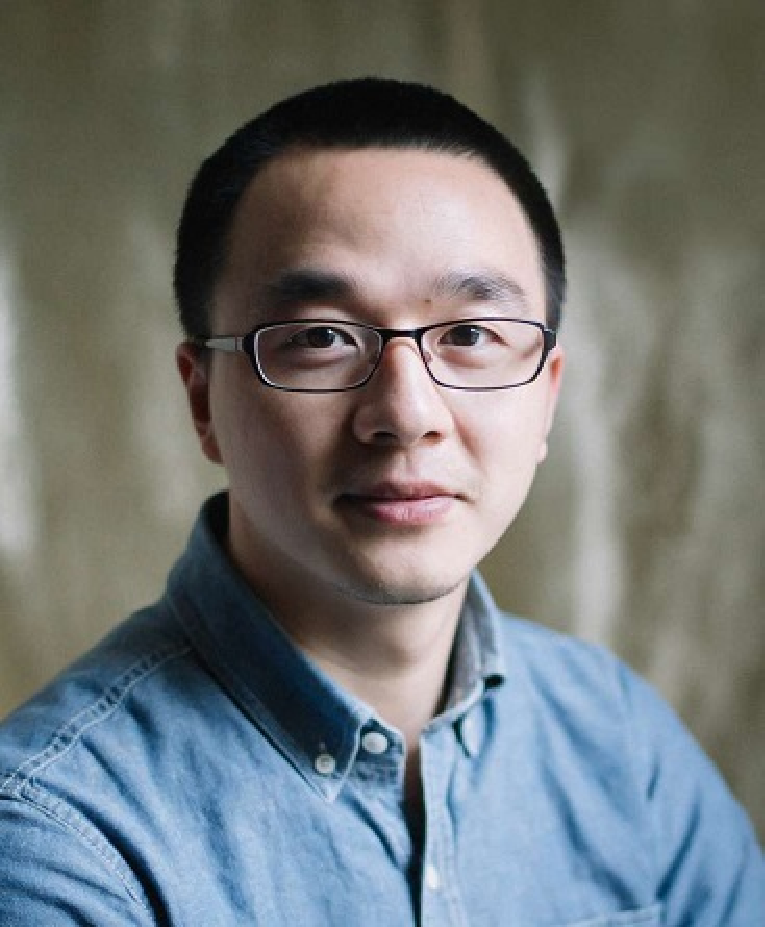}}]    	
        {\textbf{Yang Liu}} (Senior Member, IEEE)  received the B.S. degree in mathematics from Zhejiang Normal University, Zhejiang, China, in 2003, and the Ph.D. degree from Tongji University, Shanghai, in 2008. He is currently the dean of Hangzhou School of Automation and also a distinguished professor with School of Mathematical Sciences, Zhejiang Normal University. His research interests include logical systems, hybrid systems and distributed optimization. He has authored over 100 publications and three books. He is an IET Fellow, and he is an Associate Editor of Neural Processing Letters (Springer), Alexandria Engineering Journal, and Control and Decision. He was recognized by Elsevier as a Most Cited Chinese Researcher in 2020-2024, and by Clarivate Analytics as a Highly Cited Researcher in 2019-2022.
    \end{IEEEbiography}
    
\end{document}